\newcommand{\OPT}{\mathrm{OPT}}
\newcommand{\dist}{\mathrm{dist}}
\newcommand{\pr}{\mathrm{pr}}
\newcommand{\iter}{\mathrm{iter}}
\newcommand{\Clustered}{\mathrm{Clustered}}
\newcommand{\Components}{\mathrm{Components}}
\newcommand{\Comp}{\mathrm{Comp}}
\begin{document}

\title{Efficient Clustering with Limited Distance Information\thanks{A preliminary version of this article appeared under the same title in the Proceedings of the Twenty-Sixth Conference on Uncertainty in Artificial Intelligence, AUAI Press, Corvallis, Oregon, 632-641.}}

\author{\name Konstantin Voevodski \email kvodski@bu.edu \\
       \addr Department of Computer Science\\
       Boston University\\
       Boston, MA 02215, USA
       \AND
       \name  Maria-Florina Balcan \email ninamf@cc.gatech.edu \\
       \addr College of Computing\\
       Georgia Institute of Technology\\
       Atlanta, GA 30332, USA
		\AND
       \name Heiko R{\"o}glin \email heiko@roeglin.org \\
       \addr Department of Quantitative Economics\\
       Maastricht University\\
       Maastricht, The Netherlands
		\AND
       \name Shang-Hua Teng \email shanghua@usc.edu \\
       \addr Computer Science Department\\
       University of Southern California\\
       Los Angeles, CA 90089, USA
		\AND
       \name Yu Xia \email yuxia@bu.edu \\
       \addr Bioinformatics Program and Department of Chemistry \\
       Boston University\\
       Boston, MA 02215, USA}

\editor{}

\maketitle

\begin{abstract}
Given a point set $S$ and an unknown metric $d$ on $S$, we study the problem
of efficiently partitioning $S$ into $k$ clusters while querying few distances
between the points.  In our model we assume that we have access to \emph{one
versus all} queries that given a point $s \in S$ return the distances between $s$
and all other points.  We show that given a natural assumption about the
structure of the instance, we can efficiently
find an accurate clustering using only $O(k)$ distance queries.   Our algorithm uses an
\emph{active} selection strategy to choose a small set of points that we call landmarks,
and considers only the distances between landmarks and other points to produce a clustering.
We use our procedure to cluster proteins by sequence similarity.  This setting
nicely fits our model because we can use a fast sequence database search program to
query a sequence against an entire dataset.  We conduct an empirical study that
shows that even though we query a small fraction of the distances between the
points, we produce clusterings that are close to a desired clustering given by
manual classification.
\end{abstract}

\begin{keywords}
 clustering, active clustering, $k$-median, approximation algorithms, approximation stability, clustering accuracy, protein sequences
\end{keywords}

\section{Introduction}

Clustering  from pairwise distance information  is an important problem in the
analysis and exploration of data. It has many variants and formulations and it
has been extensively studied in many different communities, and many different
clustering algorithms have been proposed.

Many application domains ranging from computer vision to biology have recently
faced an explosion of data, presenting several challenges to traditional
clustering techniques.  In particular, computing the distances between all pairs
of points, as required by traditional clustering algorithms, has become
infeasible in many application domains. As a consequence it has become
increasingly important to develop effective clustering algorithms that can
operate with limited distance information.

In this work we initiate a study of clustering with limited distance information;
in particular we consider clustering with a small number of \emph{one versus all}
queries. We can imagine at least two different ways to query distances between
points. One way is to ask for distances between pairs of points, and the other is
to ask for distances between one point and all other points.  Clearly, a one
versus all query can be implemented as $|S|$ pairwise queries, but we draw a
distinction between the two because the former is often significantly faster in
practice if the query is implemented as a database search.

Our main motivating example for considering one versus all distance queries is
sequence similarity search in biology.  A program such as BLAST~\citep{blast}
(Basic Local Alignment Search Tool) is optimized to search a single sequence
against an entire database of sequences.  On the other hand, performing $|S|$
pairwise sequence alignments takes several orders of magnitude more time, even if
the pairwise alignment is very fast.  The disparity in runtime is due to the
hashing that BLAST uses to identify regions of similarity between the input
sequence and sequences in the database.  The program maintains a hash table of
all \emph{words} in the database (substrings of a certain length), linking each
word to its locations.  When a query is performed, BLAST considers each
word in the input sequence, and runs a local sequence alignment in each of its
locations in the database.  Therefore the program only performs a limited number
of local sequence alignments, rather than aligning the input sequence to each
sequence in the database.  Of course, the downside is that we
never consider alignments between sequences that do not share a word.  However,
in this case an alignment may not be relevant anyway, and we can assign a
distance of infinity to the two sequences.  Even though the search performed by BLAST
is heuristic, it has been shown that protein sequence similarity identified by BLAST is meaningful~\citep{reliableBlast}.



Motivated by such scenarios, in this paper we consider the problem of clustering a dataset with an unknown
distance function, given only the capability to ask one versus all distance
queries.  We design an efficient algorithm for clustering accurately with a small number of such queries.  To formally analyze the correctness of our algorithm we
assume that the distance function is a metric, and that our clustering problem
satisfies a natural approximation stability property regarding the utility of the $k$-median objective function
in clustering the points. In particular, our analysis assumes the $(c,\epsilon)$-property
of \citet{bbg}.  For an objective function $\Phi$ (such as k-median), the $(c,\epsilon)$-property
assumes that any clustering that is a $c$-approximation of $\Phi$ has error of at most
$\epsilon$. To define what we mean by error we assume that there exists some unknown
relevant ``target'' clustering $C_{T}$; the error of a proposed clustering $C$ is then the fraction
of misclassified points under the optimal matching between the clusters in  $C_{T}$ and $C$.
%

Our first main contribution is designing an algorithm that given the $(c,\epsilon)$-property for the k-median objective finds an accurate clustering with probability at least $1-\delta$ by using only $O(k + \ln \frac{1}{\delta})$ one versus all queries.  In particular, we use the same assumption as
\citet{bbg}, and we obtain effectively the same performance guarantees
as~\citet{bbg} but by only using a very small number of one versus all queries.
In addition to handling this  more difficult scenario, we also
provide a much faster algorithm.  The algorithm of~\citet{bbg} can be
implemented in $O(|S|^{3})$ time, while the one proposed here runs in time
$O((k + \ln \frac{1}{\delta})|S| \log |S|)$.

Our algorithm uses an \emph{active} selection strategy to choose a small set of landmark points.  We then construct a clustering using only the distances between landmarks and other points.  The runtime of our algorithm is $O(\vert L \vert |S| \log |S|)$, where $L$ is the set of landmarks that have been selected.  Our adaptive selection procedure significantly reduces the query and time complexity of the algorithm.  We show that using our adaptive procedure it suffices to choose only $O(k + \ln \frac{1}{\delta})$ landmarks to produce an accurate clustering with probability at least $1-\delta$.  Using a random selection strategy we need $O(k \ln \frac{k}{\delta})$ landmarks if the clusters of the target clustering are balanced in size, and otherwise performance degrades significantly because we may need to sample points from much smaller clusters.

We use our algorithm to cluster proteins by sequence similarity, and compare our results to gold standard manual classifications given in the Pfam \citep{pfam} and SCOP \citep{scop} databases.  These classification databases are used ubiquitously in biology to observe evolutionary relationships between proteins and to find close relatives of particular proteins.  We find that for one of these sources we obtain clusterings that usually closely match the given classification, and for the other the performance of our algorithm is comparable to that of the best known algorithms using the full distance matrix.   Both of these classification databases have limited coverage, so a completely automated method such as ours can be useful in clustering proteins that have yet to be classified.  Moreover, our method can cluster very large datasets because it is efficient and does not require the full distance matrix as input, which may be infeasible to obtain for a very large dataset.

\smallskip
\noindent{\bf Related Work:~~}
A property that is related to the $(c,\epsilon)$-property is $\epsilon$-separability, which was
introduced by \citet{orss}.  A clustering instance is
$\epsilon$-separated if the cost of the optimal $k$-clustering is at most
$\epsilon^{2}$ times the cost of the optimal clustering using $k-1$ clusters.
 The $\epsilon$-separability and
$(c,\epsilon)$ properties are related: in the case when the clusters are large the \citet{orss} condition implies the \citet{bbg} condition \citep[see][]{bbg}.



Ostrovsky et al. also present a sampling method for choosing initial centers, which when followed by a single Lloyd-type descent step gives a constant factor approximation of the $k$-means objective if the instance is $\epsilon$-separated.  However, their sampling method needs
information about the full distance matrix because the probability of picking
two points as two cluster centers is proportional to their
squared distance. A very similar (independently
proposed) strategy is used by \citet{av} to obtain an $O(\log{k})$-approximation of
the $k$-means objective on arbitrary instances. Their work
was further extended by \citet{ajm} to give a constant factor approximation using
$O(k \log k)$ centers. The latter two algorithms can be implemented with $k$ and $O(k \log k)$ one versus all distance queries, respectively.

\citet{abs} have since improved the approximation guarantee of \citet{orss} and some of the results of \citet{bbg}. In particular, they show a way
to arbitrarily closely approximate the $k$-median and $k$-means objective when the \citet{bbg} condition is satisfied and all the target clusters are large. In their analysis they use a property called weak deletion-stability, which is implied by the \citet{orss}
condition and the \citet{bbg}  condition when the target clusters are large. However, in order to find a $c$-approximation (and given our assumption a clustering that
is $\epsilon$-close to the target) the runtime of their algorithm is $n^{O(1/(c-1)^{2})}k^{O(1/(c-1))}$. On the
other hand, the runtime of our algorithm is completely independent of $c$, so it remains
efficient even when the $(c,\epsilon)$-property holds only for some very small constant $c$.

Approximate clustering using sampling has been studied extensively in recent
years
\citep[see][]{sampleBasedClustering,sampleBasedClustering2,sampleBasedClustering3}.  The methods proposed in these papers yield constant factor approximations to the
$k$-median objective
using at least $O(k)$ one versus all distance queries. However, as the constant
factor of these approximations is at least $2$, the proposed sampling methods
do not necessarily yield clusterings close to the target clustering $C_T$
if the $(c,\epsilon)$-property holds only for some small constant $c<2$, which
is the interesting case in our setting.

Our landmark selection strategy is related to the \emph{farthest first traversal} used by \citet{dasgupta}.  In each iteration this traversal selects the point that is farthest from the ones chosen so far, where distance from a point $s$ to a set $X$ is given by min$_{x \in X} d(s,x)$.  This traversal was originally used by \citet{gonzalez} to give a 2-approximation to the $k$-center problem.  It is used in \citet{dasgupta} to produce a hierarchical clustering where for each $k$ the induced $k$-clustering is a constant factor approximation of the optimal $k$-center clustering.  Our selection strategy is somewhat different from farthest first traversal because in each iteration we uniformly at random choose one of the furthest points from the ones selected so far.  In addition, the theoretical guarantees we provide are quite different from those of Gonzales and Dasgupta.

\section{Preliminaries}

Given a metric space $M=(X,d)$ with point set $X$, an unknown distance function
$d$ satisfying the triangle inequality, and a set of points $S \subseteq X$, we
would like to find a $k$-clustering $C$ that partitions the points in $S$ into
$k$ sets $C_{1},\ldots,C_{k}$ by using \emph{one versus all} distance
queries.

In our analysis we assume that $S$ satisfies the $(c,\epsilon)$-property of
\citet{bbg} for the $k$-median objective function.  The
$k$-median objective is to minimize $\Phi(C) = \sum_{i=1}^{k}\sum_{x \in C_{i}}
d(x,c_{i})$, where $c_{i}$ is the median of cluster $C_{i}$, which is the point $y \in C_{i}$ that minimizes $\sum_{x \in C_{i}}d(x,y)$. Let $\OPT_{\Phi}
= \min_{C}\Phi(C)$, where the minimum is over all $k$-clusterings of $S$,
and denote by  $C^{\ast} = \lbrace C_{1}^{\ast},\ldots,
C_{k}^{\ast} \rbrace$ a clustering achieving this value.

To formalize the $(c,\epsilon)$-property we need to define a notion of distance
between two $k$-clusterings $C = \lbrace C_{1},\ldots,C_{k} \rbrace$ and
$C' = \lbrace C'_{1},\ldots,C'_{k} \rbrace$. As in \citep{bbg}, we define the
distance between $C$ and $C'$ as the fraction of points on which they disagree
under the optimal matching of clusters in $C$ to clusters in $C'$:
\begin{displaymath}
   \dist(C,C') = \min_{\sigma \in S_{k}} \frac{1}{n} \sum_{i=1}^{k} \vert C_{i} - C_{\sigma(i)}' \vert,
\end{displaymath}
where $S_{k}$ is the set of bijections $\sigma\colon \lbrace 1,\ldots,k  \rbrace \rightarrow \lbrace 1,\ldots,k  \rbrace$.  Two clusterings $C$ and $C'$ are \emph{$\epsilon$-close}
if $\dist(C,C') < \epsilon$.

We assume that there exists some unknown relevant ``target'' clustering $C_{T}$ and given a proposed clustering $C$ we define the error of $C$ with respect to $C_{T}$ as $\dist(C,C_{T})$. Our goal is to find a clustering of low error.

The $(c,\epsilon)$-property  is defined as follows.
\begin{definition}
We say that
 the instance $(S,d)$ satisfies the $(c,\epsilon)$-property for the $k$-median objective function
with respect to the target clustering $C_T$ if any
clustering of $S$ that approximates $\OPT_{\Phi}$ within a factor of $c$ is
$\epsilon$-close to $C_{T}$, that is,
$
   \Phi(C) \le c \cdot \OPT_{\Phi} \Rightarrow \dist(C,C_{T}) < \epsilon.
$
\end{definition}

In the analysis of the next section we denote by $c_{i}^{\ast}$ the center point
of $C_{i}^{\ast}$, and use $\OPT$ to refer to the value of $C^{\ast}$ using the
$k$-median objective, that is, $\OPT = \Phi(C^{\ast})$. We define the
\emph{weight} of point $x$ to be the contribution of $x$ to the $k$-median
objective in $C^{\ast}$:  $w(x) = \min_{i}d(x,c_{i}^{\ast})$.  Similarly, we use
$w_{2}(x)$ to denote $x$'s distance to the second-closest cluster center among
$\lbrace c_{1}^{\ast},c_{2}^{\ast},\ldots,c_{k}^{\ast} \rbrace$. In addition, let
$w$ be the average weight of the points:
$
   w = \frac{1}{n} \sum_{x \in S} w(x) = \frac{\OPT}{n},
$
where $n$ is the cardinality of $S$.

\section{Clustering With Limited Distance Information}

\begin{algorithm}[H]
\caption{Landmark-Clustering($S,\alpha,\epsilon,\delta,k$)}
\begin{algorithmic}
\STATE $b = (1 + 17/ \alpha) \epsilon n$;
\STATE $q=2b$;
\STATE $\iter=4k + 16 \ln \frac{1}{\delta}$;
\STATE $s_{\min} = b+1$;
\STATE $n' = n-b$;
\STATE $L$ = \textbf{Landmark-Selection}$(q,\iter)$;
\STATE $C'$ =  \textbf{Expand-Landmarks} $(s_{\min},n',L)$;
\STATE Choose some landmark $l_{i}$ from each cluster $C'_{i}$;
\FOR{each $x \in S$}
\STATE Insert $x$ into the cluster $C''_{j}$ for $j$ = $\mathrm{argmin}_{i}d(x,l_{i})$;
\ENDFOR
\RETURN $C''$;
\end{algorithmic}
\label{alg-main}
\end{algorithm}

In this section we present a new algorithm that accurately clusters a set of
points assuming that the clustering instance satisfies the $(c,\epsilon)$-property for $c=1+\alpha$,
and the clusters in the target clustering $C_T$ are not too small.
The algorithm presented here is much faster than the one
given by Balcan et al., and does not require all pairwise distances as input.
Instead, we only require $O(k + \ln \frac{1}{\delta})$ one versus all distance queries to
achieve the same performance guarantee as in \citep{bbg} with probability at least $1-\delta$.

Our clustering method is described in Algorithm~\ref{alg-main}.
We start by using the \emph{Landmark-Selection} procedure to \emph{adaptively} select a small set of landmarks. This procedure repeatedly chooses uniformly at random one of the $q$ furthest points from the ones selected so far, for an appropriate $q$.  We use $d_{min}(s)$ to refer to the minimum distance between $s$ and any point selected so far.  Each time we select a new landmark $l$, we use a one versus all distance query to get the distances between $l$ and all other points in the dataset, and update $d_{min}(s)$ for each point $s \in S$.  To select a new landmark in each iteration, we choose a random number $i \in \lbrace n-q+1,\ldots,n \rbrace$ and use a linear time selection algorithm to select the $i$th furthest point.  We note that our algorithm only uses the distances between landmarks and other points to produce a clustering.


\begin{algorithm}[H]
\caption{Landmark-Selection($q,\iter$)}
\begin{algorithmic}
\STATE Choose $l \in S$ uniformly at random;
\STATE $L = \lbrace l \rbrace$;
\FOR{each $d(l,s) \in$ QUERY-ONE-VS-ALL$(l,S)$}
\STATE $d_{min}(s) = d(l,s)$;
\ENDFOR
\FOR{$i$ = 1 to $\iter - 1$}
\STATE Let $s_{1},...,s_{n}$ be an ordering of the points in $S$ such that $d_{min}(s_{i}) \le d_{min}(s_{i+1})$ for $i \in \lbrace 1,\ldots,n-1 \rbrace$;
\STATE Choose $l \in \lbrace s_{n - q + 1},\ldots,s_{n} \rbrace$ uniformly at random;
\STATE $L = L \cup \lbrace l \rbrace$;
\FOR{each $d(l,s) \in$ QUERY-ONE-VS-ALL$(l,S)$}
\IF{$d(l,s) <  d_{min}(s)$}
\STATE $d_{min}(s) = d(l,s)$;
\ENDIF
\ENDFOR
\ENDFOR
\RETURN $L$;
\end{algorithmic}
\end{algorithm}


\emph{Expand-Landmarks} then expands a ball $B_{l}$ around each landmark $l \in
L$ chosen by \emph{Landmark-Selection}. We use the variable $r$ to denote the
radius of all the balls: $B_{l} = \{s \in S \mid d(s,l) \le r \}$. The algorithm
starts with $r=0$, and increments it until the balls satisfy a property
described below.  For each $B_{l}$ there are $n$ relevant values of $r$ to try,
each adding one more point to $B_{l}$, which results in at most $\vert L
\vert n$ values to try in total.

The algorithm maintains a graph $G_{B} = (V_{B},E_{B})$, where vertices
correspond to balls that have at least $s_{\min}$ points in them, and two
vertices are connected by an (undirected) edge if the corresponding balls overlap
on any point: $(v_{l_{1}},v_{l_{2}}) \in E_{B}$ iff $B_{l_{1}} \cap B_{l_{2}} \ne
\emptyset$. In addition, we maintain the set of points in these balls $\Clustered
= \{ s \in S \mid \exists l\colon s \in B_{l} \}$ and a list of the connected
components of $G_{B}$, which we refer to as $\Components(G_{B}) = \lbrace
\Comp_{1},...,\Comp_{m} \rbrace$.

In each iteration, after we expand one of the balls by a point, we update
$G_{B}, \Components(G_{B})$, and $\Clustered$.  If $G_{B}$ has exactly
$k$ components, and $\vert \Clustered \vert \ge n'$, we terminate and report
points in balls that are part of the same component in $G_{B}$ as distinct
clusters.  If this condition is never satisfied, we report \textbf{no-cluster}.
A sketch of the algorithm is given below.  We use $(l^{\ast},s^{\ast})$ to refer to the next landmark-point pair that is considered, corresponding to expanding $B_{l^{\ast}}$ to include $s^{\ast}$ (Figure~\ref{fig:figure1}).

\begin{figure}
\begin{center}
\includegraphics[width=70mm,height=50mm]{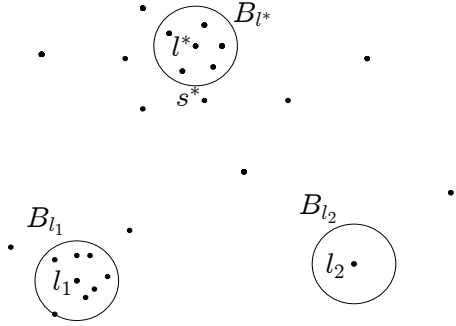}
\caption{Balls around landmarks are displayed, with the next point to be added to a ball labeled as $s^{\ast}$. \label{fig:figure1}}
\end{center}
\end{figure}

\begin{algorithm}[H]
\caption{Expand-Landmarks($s_{\min},n',L$)}
\begin{algorithmic}[1]
\WHILE{($(l^{\ast},s^{\ast})$ = Expand-Ball()) != null}
\STATE $r = d(l^{\ast},s^{\ast})$;
\STATE update $G_{B}$, $\Components(G_{B})$, and $\Clustered$
\IF{$\vert \Components(G_{B}) \vert = k$ and $\vert \Clustered \vert \ge n'$}
\RETURN $C = \lbrace C_{1},...,C_{k} \rbrace $ where $C_{i} = \lbrace s \in S \mid \exists l\colon s \in B_{l}$ and $v_{l} \in \Comp_i \rbrace$.
\ENDIF
\ENDWHILE
\RETURN \textbf{no-cluster};
\end{algorithmic}
\end{algorithm}

\vspace{-0.3cm}

The last step of our algorithm takes the clustering  $C'$ returned by
\emph{Expand-Landmarks} and improves it.  We compute a set $L'$ that contains
exactly one landmark from each cluster $C'_{i} \in C'$ (any landmark is sufficient), and assign each point $x \in S$ to the cluster corresponding to the closest landmark in $L'$.

We now present our main theoretical guarantee for Algorithm~\ref{alg-main}.

\begin{theorem}\label{thm:Main}
Given a metric space $M = (X,d)$, where $d$ is unknown, and a set of points $S$,
if the instance $(S,d)$ satisfies the $(1+\alpha,\epsilon)$-property for the $k$-median objective function
and if each cluster in the target clustering $C_{T}$ has size at least
$(4+51/\alpha)\epsilon n$, then Landmark-Clustering outputs a
clustering that is $\epsilon$-close to $C_{T}$ with probability  $1-\delta$ in time $O((k + \ln \frac{1}{\delta})|S| \log |S|)$ using $O(k + \ln \frac{1}{\delta})$ \emph{one versus all} distance queries.
\end{theorem}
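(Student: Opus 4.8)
The plan is to follow the approximation-stability framework of \citet{bbg}, whose structural analysis I will take as a black box. Recall that under the $(1+\alpha,\epsilon)$-property the optimal $k$-median clustering $C^{\ast}$ is itself $\epsilon$-close to $C_{T}$, and there is a ``critical distance'' $d_{\mathrm{crit}} = \Theta(\alpha w/\epsilon)$ such that, calling a point $x$ \emph{good} if $w(x) < d_{\mathrm{crit}}$ and $w_{2}(x) - w(x)$ exceeds a suitable constant multiple of $d_{\mathrm{crit}}$ (and \emph{bad} otherwise), the set $B$ of bad points has $|B| < b = (1+17/\alpha)\epsilon n$. The cores $X_{i} = C_{i}^{\ast}\setminus B$ then satisfy: each $X_{i}$ has diameter at most $2 d_{\mathrm{crit}}$; distinct cores are separated by a strictly larger multiple of $d_{\mathrm{crit}}$ (larger than the radius of any ball that will arise in the algorithm); and, combining the assumption that each target cluster has size at least $(4+51/\alpha)\epsilon n$ with $|B| < b$ and $\dist(C^{\ast},C_{T})<\epsilon$, each core satisfies $|X_{i}| > q = 2b > s_{\min}$. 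These facts drive the whole argument, and the constants in the theorem statement are exactly what make them hold.

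First I would analyze \textbf{Landmark-Selection}, aiming to show that with probability at least $1-\delta$ the returned set $L$ contains a good point of every cluster. The ``farthest-point'' rule keeps making progress: once $L$ contains a point of core $X_{i}$, every point of $X_{i}$ lies within $2d_{\mathrm{crit}}$ of a landmark, whereas a point of a not-yet-hit core is much farther from all landmarks (its nearest landmark is either a bad point or a good point of another, well-separated, core). Hence the $q = 2b$ points of largest $d_{\min}$ are dominated by points of not-yet-covered cores, and since at most $|B| < b = q/2$ of them can be bad, with probability at least $1/2$ the landmark chosen in a given iteration is a good point of a previously uncovered cluster; a Chernoff bound then shows $\iter = 4k + 16\ln\frac{1}{\delta}$ iterations suffice to cover all $k$ clusters except with probability $\delta$. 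I expect this to be the main obstacle: bounding how much of an already-covered (or ``masked'') core can appear among the $q$ farthest points is where the slack hidden in $b$, $q$, $d_{\mathrm{crit}}$, and the good/bad thresholds is spent, so the constant bookkeeping has to be done carefully.

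Next I would analyze \textbf{Expand-Landmarks}, conditioning on the event that $L$ meets every core. As the radius $r$ grows there is an interval of values, all of order $d_{\mathrm{crit}}$, at which (i) for each $i$ the ball around a landmark lying inside $X_{i}$ already contains all of $X_{i}$, hence more than $s_{\min}$ points, so it is a vertex of $G_{B}$ and $X_{i}$ lies in a single connected component; (ii) no ball reaches from one core to another, since distinct cores are separated by more than the diameter of any such ball; and (iii) a ball around a landmark far from every core contains only bad points, of which there are fewer than $b+1 = s_{\min}$, so it is not a vertex at all. Thus at such an $r$ there are exactly $k$ components, covering at least $n - |B| > n - b = n'$ points, so the algorithm terminates rather than returning \textbf{no-cluster}. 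Moreover, the radius at which the termination test \emph{first} succeeds is safe: if $|\Components(G_{B})| = k$ and at least $n'$ points are clustered, then all but fewer than $b$ good points are clustered, so every core is touched by the clustered set; no two cores can lie in one component (separation), and no core can be split into two (that would force more than $k$ components since all $k$ cores are touched), so the reported clustering $C'$ partitions the clustered good points exactly as $C^{\ast}$ does, up to relabeling.

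Finally I would handle the clean-up step and assemble the bound. For each $i$ pick a landmark $\ell_{i} \in C'_{i}$; by the previous step $\ell_{i}$ lies in some core $X_{\pi(i)}$, so $d(\ell_{i}, c^{\ast}_{\pi(i)}) < d_{\mathrm{crit}}$. For any $x$ with $w_{2}(x) - w(x) > 2 d_{\mathrm{crit}}$, the triangle inequality gives $d(x,\ell_{i}) < w(x) + d_{\mathrm{crit}}$ for the $\ell_{i}$ sitting in the core of $x$'s own center, and $d(x,\ell_{j}) > w_{2}(x) - d_{\mathrm{crit}}$ for every other $\ell_{j}$, so $x$ is reassigned to its $C^{\ast}$-cluster; hence $C''$ agrees with $C^{\ast}$ (up to $\pi$) outside $\{x : w_{2}(x) - w(x) \le 2 d_{\mathrm{crit}}\}$, and by the structural lemma this set together with the points on which $C^{\ast}$ and $C_{T}$ disagree has fewer than $\epsilon n$ points, so $\dist(C'',C_{T}) < \epsilon$. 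For the resources: \textbf{Landmark-Selection} makes one one-versus-all query per iteration over $\iter = O(k + \ln\frac{1}{\delta})$ iterations, and \textbf{Expand-Landmarks} and the clean-up reuse the stored distances, so $O(k + \ln\frac{1}{\delta})$ queries suffice; the running time is $O(\iter\cdot n)$ for selection (a linear-time selection plus an $O(n)$ update of $d_{\min}$ per iteration), $O(|L|\,n \log n)$ for Expand-Landmarks (sorting the $|L|\,n$ candidate radii and maintaining the components, e.g.\ with union--find), and $O(|L|\,n)$ for the clean-up, and since $|L| = \iter = O(k + \ln\frac{1}{\delta})$ this is $O((k + \ln\frac{1}{\delta})\,n \log n)$ overall.
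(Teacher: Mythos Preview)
Your overall plan matches the paper's, but the analysis of \textbf{Landmark-Selection} is where you diverge, and the argument you sketch does not work as stated. You claim that in each iteration, with probability at least $1/2$, the chosen landmark is a good point of a \emph{previously uncovered} core. The justification (``a point of a not-yet-hit core is much farther from all landmarks'') breaks down: a bad point can be selected as a landmark and sit arbitrarily close to a core $X_{j}$ that contains no good landmark yet, driving $d_{\min}$ of every $x\in X_{j}$ below $2d_{\mathrm{crit}}$. That core is then ``masked'' from the top-$q$ list without ever having been hit by a good point, so there is no reason the good points among the $2b$ farthest lie in uncovered cores. This is not just constant bookkeeping; the coupon-collector picture is the wrong mechanism.

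The paper's argument is indirect and avoids this entirely. It shows only the easy fact that each iteration selects \emph{some} good point with probability $\ge 1/2$ (at most $b$ bad points among $2b$ candidates), so a Chernoff bound gives $k$ distinct good landmarks after $4k+16\ln\frac{1}{\delta}$ rounds. Then a pigeonhole/rank argument finishes: either these $k$ good landmarks hit all $k$ cores, or two lie in the same core $X_{i}$; when the second one $y$ was selected it had $d_{\min}(y)<2d_{\mathrm{crit}}$, yet $y$ was among the $2b$ largest $d_{\min}$ values, so at least $n-2b+1$ points have $d_{\min}<2d_{\mathrm{crit}}$, and since each $|X_{j}|\ge 2b$ every core contains such a point. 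Note the conclusion is weaker than yours---a landmark within $2d_{\mathrm{crit}}$ of each core, not inside it---and that is all Expand-Landmarks needs.

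This weaker conclusion also propagates to your clean-up step. You assert that any landmark $\ell_{i}\in C'_{i}$ lies in a core, giving $d(\ell_{i},c^{\ast}_{\pi(i)})<d_{\mathrm{crit}}$, but nothing in the Expand-Landmarks analysis rules out $\ell_{i}$ being a bad point whose ball merely overlaps $X_{\pi(i)}$. The paper only obtains $d(\ell_{i},c^{\ast}_{\pi(i)})<5d_{\mathrm{crit}}$ (radius $<4d_{\mathrm{crit}}$ plus $d_{\mathrm{crit}}$ to reach the center through a shared good point), and correspondingly needs the larger gap $w_{2}(x)-w(x)\ge 17d_{\mathrm{crit}}$, not $2d_{\mathrm{crit}}$, to guarantee correct reassignment. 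Your Expand-Landmarks and resource analyses are otherwise in line with the paper's.
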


Before we prove the theorem, we will introduce some notation and use an analysis similar to the one in \citep{bbg} to argue about the structure of the clustering instance.
Let $\epsilon^{\ast} = \dist(C_{T},C^{\ast})$.  By our assumption that the
$k$-median clustering of $S$ satisfies the $(1+\alpha,\epsilon)$-property we have
$\epsilon^{\ast} < \epsilon$.  Since each cluster in the target clustering has at
least $(4+51/\alpha)\epsilon n$ points, and the \emph{optimal k-median
clustering} $C^{\ast}$ differs from the target clustering by $\epsilon^{\ast}n
\le \epsilon n$ points, each cluster in $C^{\ast}$ must have at least
$(3+51/\alpha)\epsilon n$ points.

Let us define the \emph{critical distance} $d_{\mathrm{crit}} = \frac{\alpha w}{17
\epsilon}$.  We call a point $x$ \emph{good} if both $w(x) < d_{\mathrm{crit}}$ and
$w_{2}(x) - w(x) \ge 17 d_{\mathrm{crit}}$, else $x$ is called \emph{bad}. In other words, the
$good$ points are those points that are close to their own cluster center and far
from any other cluster center.  In addition, we will break up the \emph{good}
points into \emph{good sets} $X_{i}$, where $X_{i}$ is the set of the
\emph{good} points in the optimal cluster $C_{i}^{\ast}$.
So each set $X_{i}$ is the ``core'' of the optimal cluster $C_{i}^{\ast}$.

Note that the distance between two points $x,y \in X_{i}$ satisfies $d(x,y) \le
d(x,c^{\ast}_{i}) + d(c^{\ast}_{i},y) = w(x) + w(y) < 2 d_{\mathrm{crit}}$.  In addition, the distance between any two points in different good sets is greater than $16d_{\mathrm{crit}}$.  To see this, consider a pair of points $x \in X_{i}$ and $y \in X_{j \ne i}$.  The distance from $x$ to $y$'s cluster center $c^{\ast}_{j}$ is at least $17 d_{\mathrm{crit}}$. By the triangle inequality, $d(x,y) \ge d(x,c^{\ast}_{j}) - d(y,c^{\ast}_{j}) > 17 d_{\mathrm{crit}} - d_{\mathrm{crit}} = 16 d_{\mathrm{crit}}$.

If the k-median instance $(M,S)$ satisfies the
$(1+\alpha,\epsilon)$-property with respect to $C_{T}$, and each cluster in
$C_{T}$ has size at least $2\epsilon n$, then

\begin{enumerate}
\item less than $(\epsilon - \epsilon^{\ast})n$ points $x \in S$ on which $C_{T}$ and $C^{\ast}$ agree have $w_{2}(x) - w(x) < \frac{\alpha w}{\epsilon}$.
\item at most $17 \epsilon n / \alpha$ points $x \in S$ have $w(x) \ge \frac{\alpha w}{17 \epsilon}$.
\end{enumerate}

The first part is proved by \citet{bbg}.  The intuition is that if too many points on which $C_{T}$ and $C^{\ast}$ agree
are close enough to the second-closest center among $\lbrace
c_{1}^{\ast},c_{2}^{\ast},\ldots,c_{k}^{\ast} \rbrace$, then we can move them to
the clusters corresponding to those centers, producing a clustering that is far
from $C_{T}$, but whose objective value is close to OPT, violating the
$(1+\alpha,\epsilon$)-property.  The second part follows from the fact that
$\sum_{x \in S} w(x) = OPT = wn.$

Then using these facts and the definition of $\epsilon^{\ast}$ it follows that at
most $\epsilon^{\ast} n + (\epsilon - \epsilon^{\ast})n + 17 \epsilon n / \alpha
= \epsilon n + 17 \epsilon n / \alpha = (1 + 17/ \alpha) \epsilon n = b$ points
are bad.  Hence each $\vert X_{i} \vert = \vert C^{\ast}_{i} \backslash B \vert
\ge (2+34/\alpha)\epsilon n = 2b$.


In the remainder of this section we prove that given this structure of the clustering instance, \emph{Landmark-Clustering} finds an accurate clustering. We first show that almost surely the set of landmarks returned by \emph{Landmark-Selection} has the property that each of the cluster cores has a landmark near it.  We then argue that given a set of landmarks with this property, \emph{Expand-Landmarks} finds a partition $C'$ that clusters most of the points in each core correctly.  We conclude with the proof of the theorem, which argues that the clustering returned by the last step of our procedure is a further improved clustering that is very close to $C^{\ast}$ and $C_{T}$.

The \emph{Landmark-Clustering} algorithm first uses \emph{Landmark-Selection}($q,\iter$) to choose a set of landmark points.  The following lemma proves that for an appropriate choice of $q$ after selecting only $\iter = O(k + \ln \frac{1}{\delta})$ landmarks with probability at least $1 - \delta$ one of them is closer than $2 d_{\mathrm{crit}}$ to some point in each good set.

\begin{lemma}\label{lemma:PointCoverage}
Given $L$ = Landmark-Selection $(2b,4k+16\ln\frac{1}{\delta})$, with probability at least $1-\delta$ there is a landmark closer than $2 d_{crit}$ to some point in each good set.
\end{lemma}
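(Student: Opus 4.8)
The plan is to analyze the ``good'' events that each iteration of \emph{Landmark-Selection} either hits a new good set (i.e., selects a landmark within $2d_{\mathrm{crit}}$ of some point of a not-yet-covered core $X_i$), or else is ``wasted''. I would first set up the key counting observation: at any point in the execution, let $U$ be the union of the good sets $X_i$ that do not yet have a nearby landmark. Since each $|X_i| \ge 2b = q$ and the good sets are pairwise at distance greater than $16 d_{\mathrm{crit}}$, a landmark chosen within $2 d_{\mathrm{crit}}$ of a point of $X_i$ is at distance greater than $14 d_{\mathrm{crit}}$ from every other good set, so no single landmark can ``cover'' two good sets at once. Thus we need at least one successful iteration per uncovered core, i.e., at most $k$ successes suffice.

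Next I would bound the probability that a given iteration is a success, conditioned on the current state. The selection rule picks uniformly among the $q$ points with the largest $d_{\min}$ value. The idea is that among these $q$ farthest points, at most $b$ can be bad (there are only $b$ bad points total), so at least $q - b = b$ of them are good. Now I claim that at least one uncovered good set contributes many of its points to this top-$q$ list whenever $U \ne \emptyset$: if some core $X_i$ is uncovered, then no previously selected landmark lies within $2 d_{\mathrm{crit}}$ of any point of $X_i$, so every point of $X_i$ has $d_{\min} > 2 d_{\mathrm{crit}}$; meanwhile, any good set $X_j$ that \emph{is} covered has some point within $2d_{\mathrm{crit}}$ of a landmark, and by the $2 d_{\mathrm{crit}}$ diameter of $X_j$ together with the triangle inequality, \emph{every} point of $X_j$ has $d_{\min} < 4 d_{\mathrm{crit}}$ --- wait, more carefully: a point of a covered core is within $2d_{\mathrm{crit}} + 2d_{\mathrm{crit}} = 4 d_{\mathrm{crit}}$ of that landmark. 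The cleaner route is: because each uncovered core has $2b$ points all with $d_{\min} > 2d_{\mathrm{crit}}$, and we only need to compare against covered-core points which have small $d_{\min}$, the $q = 2b$ farthest points must include at least $b$ points lying in uncovered cores (since the at most $b$ bad points plus at most ``few'' small-$d_{\min}$ points cannot fill the top $2b$ slots). Hence conditioned on $U \ne \emptyset$, at least half of the top-$q$ candidates lie in uncovered good sets, so the iteration is a success with probability at least $1/2$.

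Given a per-iteration success probability of at least $1/2$ whenever not all cores are covered, the number of iterations needed is dominated by the number of heads in a sequence of $\mathrm{iter} - 1$ fair coin flips needed to reach $k$ heads. I would finish with a standard Chernoff/Hoeffding bound: with $\mathrm{iter} - 1 = 4k + 16\ln\frac{1}{\delta} - 1$ Bernoulli$(1/2)$ trials the expected number of successes is at least $2k + 8\ln\frac1\delta - 1/2 \ge 2k$ (roughly), and the probability of getting fewer than $k$ successes is at most $\exp(-(\text{something})) \le \delta$; the constants $4$ and $16$ are chosen precisely so this works out.

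The main obstacle I anticipate is the conditional lower bound on the success probability: one has to argue carefully that the top-$q$ list is genuinely ``dominated'' by points of uncovered cores. This requires combining (a) the count $2b$ of points in each uncovered core all having $d_{\min} > 2 d_{\mathrm{crit}}$, (b) the bound $b$ on the total number of bad points, and (c) an argument that points which are \emph{not} in uncovered cores and \emph{not} bad --- i.e., points in already-covered cores --- have $d_{\min}$ small enough (below $2 d_{\mathrm{crit}}$, or at least small enough relative to the uncovered-core points) that they cannot occupy the top $2b$ slots. Handling the boundary case where all cores are already covered (the lemma is then trivially true for the rest of the run) and making sure the success-probability bound is valid in every intermediate configuration is the delicate part; the rest is routine concentration.
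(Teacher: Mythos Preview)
Your approach has a genuine gap at exactly the place you flag as the ``main obstacle.'' You want to show that, whenever some core is uncovered, at least $b$ of the top $2b$ points lie in uncovered cores. Your argument for this hinges on claim (c): points in already-covered cores have $d_{\min}$ small enough that they cannot occupy the top $2b$ slots. But this is not true in general. A covered core $X_i$ is only guaranteed to have one point within $2d_{\mathrm{crit}}$ of a landmark; the other points of $X_i$ can have $d_{\min}$ as large as just below $4d_{\mathrm{crit}}$ (via the triangle inequality through that one point). Meanwhile an uncovered core $X_j$ only satisfies $d_{\min}(z)\ge 2d_{\mathrm{crit}}$ for $z\in X_j$; a bad landmark sitting at distance exactly $2d_{\mathrm{crit}}$ from $X_j$ (which does not cover $X_j$) can pin all of $X_j$ down near $2d_{\mathrm{crit}}$. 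In such a configuration the top $2b$ list can consist of the $\le b$ bad points together with covered-core points at $d_{\min}\approx 3.8\,d_{\mathrm{crit}}$, containing no uncovered-core points at all, so the per-iteration success probability you need is not bounded below by $1/2$.

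The paper sidesteps this entirely by tracking a weaker per-iteration event: it only uses that the selected point is \emph{good} (in some core, covered or not) with probability at least $1/2$, which follows immediately since at most $b$ of the $2b$ candidates are bad. The real work is a structural argument after $k$ good landmarks have been chosen. Either they come from $k$ distinct cores and we are done, or two of them, say $x$ then $y$, lie in the same core. Then $d_{\min}(y)\le d(x,y)<2d_{\mathrm{crit}}$ at the moment $y$ is selected; but $y$ was drawn from the top $2b$, so at least $n-2b+1$ points already have $d_{\min}<2d_{\mathrm{crit}}$. Since each core has at least $2b$ points, every core must contain such a point. This pigeonhole step is the missing idea: rather than controlling which core each good selection hits, one exploits what the \emph{first repeated} core reveals about the global $d_{\min}$ profile.
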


\begin{proof}
Because there are at most $b$ bad points and in each iteration we uniformly at random choose one of $2b$ points, the probability that a good point is added to $L$ is at least 1/2 in each iteration. Using a Chernoff bound we show that the
probability that fewer than $k$ good points have been added to $L$ after $t > 2k$ iterations is less than $e^{-t(1-\frac{2k}{t})^{2}/4}$ (Lemma~\ref{lemma:FewGoodPoints}). For $t = 4k + 16 \ln \frac{1}{\delta}$ \begin{displaymath}
e^{-t(1-\frac{2k}{t})^{2}/4} < e^{-(4k + 16 \ln \frac{1}{\delta})0.5^2/4} < e^{-16 \ln \frac{1}{\delta}/16} = \delta.
\end{displaymath}
Therefore after $t = 4k + 16 \ln \frac{1}{\delta}$ iterations this probability is smaller than $\delta$.

We argue that once we select $k$ good points using our procedure, one of them must be closer than $2 d_{\mathrm{crit}}$ to some point in each good set.  Note that the selected good points must be distinct because we must have chosen at least $k$ good points after $b+k$ iterations and we cannot choose the same point twice in the first $n-2b$ iterations.  There are two possibilities regarding the first $k$ good points added to $L$: they are either selected from distinct good sets, or at least two of them are selected from the same good set.

If the former is true then the statement trivially holds.  If the latter is true, consider the first time that a second point is chosen from the
same good set $X_{i}$.  Let us call these two points $x$ and $y$, and assume that
$y$ is chosen after $x$.  The distance between $x$ and $y$ must be less than $2 d_{\mathrm{crit}}$ because they are in the same good set.
Therefore when $y$ is chosen, $\min_{l \in L} d(l,y) \le d(x,y) < 2
d_{\mathrm{crit}}$.  Moreover, $y$ is chosen from $\lbrace s_{n - 2b +
1},...,s_{n} \rbrace$, where $\min_{l \in L} d(l,s_{i}) \le \min_{l \in L}
d(l,s_{i+1})$.  Therefore when $y$ is chosen, at least $n-2b+1$ points $s \in S$
(including $y$) satisfy $\min_{l \in L} d(l,s) \le \min_{l \in L} d(l,y) < 2
d_{\mathrm{crit}}$.  Since each good set satisfies $ \vert X_{i} \vert \ge 2b$, it follows that there must be a landmark closer than $2 d_{crit}$ to some point in each good set.

\end{proof}

\begin{lemma}\label{lemma:FewGoodPoints}
The probability that fewer than $k$ good points have been chosen as landmarks
after $t > 2k$ iterations of Landmark-Selection is less than
$e^{-t(1-\frac{2k}{t})^{2}/4}$.
\end{lemma}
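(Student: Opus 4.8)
The plan is to view the landmark selections as a sequence of dependent Bernoulli trials and reduce to a Chernoff bound via stochastic domination. For $i=1,\dots,t$, let $Y_i$ be the indicator of the event that the $i$-th landmark chosen by \emph{Landmark-Selection} is a good point. The crucial observation is that, conditioned on \emph{any} outcome of the first $i-1$ selections, the $i$-th landmark is drawn uniformly from a candidate set — all of $S$ when $i=1$, and the $q=2b$ points of largest $d_{min}$ otherwise — and since $S$ contains at most $b$ bad points, at least $b$ of the $2b$ points in any such candidate set are good (and, for $i=1$, a fraction $(n-b)/n$ of $S$ is good). Hence $\Pr[Y_i = 1 \mid Y_1,\dots,Y_{i-1}] \ge 1/2$ for every $i$, provided $b \le n/2$; this last inequality is harmless here, since the hypothesis that every target cluster has size at least $(4+51/\alpha)\epsilon n$ forces $(4+51/\alpha)\epsilon \le 1$ and therefore $b = (1+17/\alpha)\epsilon n < n/3$.

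Next I would convert this into a tail bound on $G := \sum_{i=1}^t Y_i$, the number of iterations (among the first $t$) in which a good point was chosen. A uniform lower bound of $1/2$ on every conditional success probability is exactly what is needed to couple the sequence $(Y_i)$ from below with an i.i.d.\ sequence of fair coin flips; consequently $G$ stochastically dominates $B \sim \mathrm{Bin}(t,1/2)$, and in particular $\Pr[G < k] \le \Pr[B < k]$. (That the good points so chosen are actually distinct in the relevant range of $t$ is not needed for this lemma and is handled inside the proof of Lemma~\ref{lemma:PointCoverage}.)

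It then remains to bound $\Pr[B<k]$ by a standard multiplicative Chernoff estimate. Writing $\mu = t/2$ for the mean of $B$ and setting $\delta := 1 - 2k/t$, which lies in $(0,1)$ because $t > 2k$, we have $k = (1-\delta)\mu$, so the lower-tail bound $\Pr[B \le (1-\delta)\mu] \le \exp(-\mu\delta^2/2)$ gives
\begin{displaymath}
\Pr[G < k] \;\le\; \Pr[B < k] \;\le\; \exp\!\left(-\frac{t}{2}\cdot\frac{(1-2k/t)^2}{2}\right) \;=\; e^{-t(1-2k/t)^2/4},
\end{displaymath}
which is the claimed bound.

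The one step that needs care is the stochastic-domination argument in the second paragraph: the trials $Y_i$ are genuinely dependent, since the candidate set used at step $i$ depends on which landmarks were selected earlier, so one cannot invoke an off-the-shelf Chernoff bound for independent variables. The resolution is the observation above that all conditional success probabilities are bounded below by the same constant $1/2$, which is enough for the coupling; everything after that is the routine computation displayed above. I would also be explicit about the mild side conditions $b \le n/2$ (used for the first landmark) and $t \le n - 2b$ (under which no landmark is ever re-selected), both of which hold automatically in the parameter regime of Theorem~\ref{thm:Main}.
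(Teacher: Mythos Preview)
Your proof is correct and follows essentially the same approach as the paper: define per-iteration indicators, observe that each succeeds with probability at least $1/2$ because at most $b$ of the $2b$ candidates can be bad, and apply the lower-tail Chernoff bound with $\delta = 1 - 2k/t$. Your version is in fact more careful than the paper's, which invokes Chernoff directly without addressing the dependence among trials or the first (uniform-over-$S$) selection; your coupling argument and the check that $b < n/3$ cleanly patch those small gaps.
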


\begin{proof}
Let $X_{i}$ be an indicator random variable defined as follows: $X_{i} = 1$
if point chosen in iteration $i$ is a good point, and 0 otherwise.  Let $X=\sum_{i=1}^{t}X_{i}$, and $\mu$ be the expectation of $X$.  In other words, $X$ is the number of good points chosen after $t$ iterations of the algorithm, and $\mu$ is its expected value.

Because in each round we uniformly at random choose one of $2b$ points and there are at most $b$ bad points in total, E$\lbrack X_{i} \rbrack \ge 1/2$ and hence $\mu \ge t/2$.  By the Chernoff bound, for any $\delta > 0$, Pr$\lbrack X < (1-\delta)\mu \rbrack < e^{-\mu\delta^{2}/2}$.

If we set $\delta = 1 - \frac{2k}{t}$, we have $(1-\delta)\mu = (1-(1-\frac{2k}{t}))\mu \ge (1-(1-\frac{2k}{t}))t/2 = k$.  Assuming that $t \ge 2k$, it follows that $\textrm{Pr} \lbrack X < k \rbrack \le \textrm{Pr} \lbrack X < (1-\delta)\mu \rbrack < e^{-\mu\delta^{2}/2} = e^{-\mu(1-\frac{2k}{t})^{2}/2} \le e^{-t/2(1-\frac{2k}{t})^{2}/2}$.
\end{proof}

The algorithm then uses the \emph{Expand-Landmarks} procedure to find a $k$-clustering $C'$.  The following lemma states that $C'$ is an accurate clustering, and has an additional property that is relevant for the last part of the algorithm.

\begin{lemma}
Given a set of landmarks $L$ chosen by Landmark-Selection
so that the condition in Lemma~\ref{lemma:PointCoverage} is satisfied,
Expand-Landmarks$(b+1,n-b,L)$ returns a $k$-clustering $C' = \lbrace C'_{1},C'_{2},
\ldots C'_{k} \rbrace$ in which each cluster contains points from a distinct good
set $X_{i}$.  If we let $\sigma$ be a bijection mapping each good set $X_{i}$ to
the cluster $C'_{\sigma(i)}$ containing points from $X_{i}$, the distance between
$c^{\ast}_{i}$ and any landmark $l$ in $C'_{\sigma(i)}$ satisfies
$d(c^{\ast}_{i},l) < 5 d_{\mathrm{crit}}$.
\end{lemma}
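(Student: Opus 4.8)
The plan is to track the behavior of \emph{Expand-Landmarks} as the radius $r$ grows, and identify the moment it terminates. I would organize the argument around two thresholds for $r$: a small one, $r = 2d_{\mathrm{crit}}$, at which each good set already lies inside a single ball and the graph $G_B$ has "enough structure" on the cores; and a larger one, roughly $r = 4d_{\mathrm{crit}}$, past which balls centered at landmarks near \emph{different} good sets cannot overlap, so the $k$ components stay separated. The key quantitative facts I would invoke are the ones established just before the lemma: any two points in the same good set are within $2d_{\mathrm{crit}}$; any two points in different good sets are more than $16 d_{\mathrm{crit}}$ apart; there are at most $b$ bad points; each $|X_i| \ge 2b$; and, from Lemma~\ref{lemma:PointCoverage}, for each $i$ there is a landmark $l$ with $d(l, x) < 2 d_{\mathrm{crit}}$ for some $x \in X_i$.

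First I would show that when $r$ reaches a value strictly less than some modest multiple of $d_{\mathrm{crit}}$ (I expect $3 d_{\mathrm{crit}}$ suffices), every good set $X_i$ is entirely contained in the ball $B_{l}$ of the nearby landmark $l$ guaranteed by Lemma~\ref{lemma:PointCoverage}: if $d(l,x) < 2 d_{\mathrm{crit}}$ for some $x \in X_i$, then for any $y \in X_i$, $d(l,y) \le d(l,x) + d(x,y) < 2d_{\mathrm{crit}} + 2 d_{\mathrm{crit}} = 4 d_{\mathrm{crit}}$, so $X_i \subseteq B_l$ once $r \ge 4 d_{\mathrm{crit}}$ — and actually I should be careful to match the constant the authors use; I'd aim to show containment at radius $< 5 d_{\mathrm{crit}}$, matching the stated bound $d(c_i^\ast, l) < 5 d_{\mathrm{crit}}$ via $d(c_i^\ast,l) \le d(c_i^\ast,x) + d(x,l) < d_{\mathrm{crit}} + 2 d_{\mathrm{crit}} + \ldots$, adjusting as needed. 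Since $|X_i| \ge 2b > s_{\min} = b+1$, such a ball $B_l$ becomes a vertex of $G_B$ once it contains $X_i$. So by the time $r$ is around $4$–$5\,d_{\mathrm{crit}}$, $G_B$ has at least one vertex "representing" each good set, and $\Clustered$ contains all good points, hence $|\Clustered| \ge n - b = n'$. Next I would argue separation: as long as $r < 8 d_{\mathrm{crit}}$ (half of $16 d_{\mathrm{crit}}$), no ball can contain points from two different good sets, because that would force two points more than $16 d_{\mathrm{crit}}$ apart to both be within $r$ of the landmark. More carefully, I need that the connected components of $G_B$ never merge good sets: two balls $B_{l_1}, B_{l_2}$ that overlap must, if each contains a good point, contain good points from the same good set — this requires bounding how far a landmark can be from a good set while still having its ball reach into it, and comparing with the $16 d_{\mathrm{crit}}$ inter-core gap. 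Choosing the window of $r$ values where "all cores covered" has already happened but "cores still separated" still holds gives a range where $G_B$ has exactly $k$ components, each a distinct good set's worth of points, so the termination condition fires.

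Then I would handle the termination logic itself. \emph{Expand-Landmarks} stops the \emph{first} time $|\Components(G_B)| = k$ and $|\Clustered| \ge n'$. I'd argue: (a) the condition is eventually satisfied — by the radius analysis above there is a window of $r$ where it holds, so the algorithm does not return \textbf{no-cluster}; (b) at the moment it first fires, $G_B$ cannot have fewer "real" components than $k$ in a way that merges good sets, nor extra components, since before the window the number of components-with-good-points is between $1$ and $k$ and is nondecreasing toward $k$ only by separating, never by merging across cores (here I use that adding points to balls only adds edges, so components only merge as $r$ grows — so I must make sure $k$ is first hit on the way up, which it is, because at $r=0$ there are $0$ components and the count is monotone-ish; the subtlety is balls crossing the $s_{\min}$ threshold can create new singleton components). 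The cleanest route: show that for all $r$ in the relevant window each component of $G_B$ contains points from exactly one good set, and that all $k$ good sets are represented; monotonicity of edge-addition then ensures that once we reach $k$ components they are exactly these $k$, and any earlier termination with $k$ components would already have this one-good-set-per-component structure. Finally, the distance bound $d(c_i^\ast, l) < 5 d_{\mathrm{crit}}$ for any landmark $l$ whose ball is in component $C'_{\sigma(i)}$: such a landmark's ball overlaps (through a chain within the component) with a ball containing $X_i$, and since the terminating radius $r$ is below the bound I'll have pinned down, I can bound $d(l, x)$ for some $x \in X_i$ by $r$ plus chain-overlap slack, then $d(c_i^\ast, l) \le d(c_i^\ast, x) + d(x,l) < d_{\mathrm{crit}} + (\text{something} < 4 d_{\mathrm{crit}}) = 5 d_{\mathrm{crit}}$ — the constant bookkeeping is the place to be most careful.

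\textbf{The main obstacle} I anticipate is pinning the terminating radius $r$ precisely enough. I need an upper bound on $r$ at termination small enough to (i) guarantee separation of the $k$ cores throughout, and (ii) yield the clean $5 d_{\mathrm{crit}}$ bound in the last sentence; and I need it consistent with a lower bound large enough that all cores are already covered and $|\Clustered| \ge n'$. The tension is that "chains of overlapping balls" within a component can, in principle, let a landmark be several radii away from the core it is grouped with, so I must bound the number of balls in a chain or, better, argue directly that within a single component every ball actually intersects the core (not just reaches it transitively) — this likely uses that non-core ("bad") points are few and that a ball qualifying as a vertex has $\ge s_{\min} = b+1 > b$ points, forcing it to contain at least one good point, which then must be in the unique good set associated with that component. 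Getting that "$\ge s_{\min}$ points $\Rightarrow$ contains a good point from the component's good set" step airtight, and then chasing the constants so everything lands at $5 d_{\mathrm{crit}}$, is where the real work is; the probabilistic part is entirely inherited from Lemma~\ref{lemma:PointCoverage} and the counting of bad points is inherited from the structural discussion preceding the lemma.
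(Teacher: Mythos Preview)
Your approach is correct and essentially identical to the paper's: show the termination condition holds at some $r^{\ast} < 4d_{\mathrm{crit}}$ (each good set sits inside one landmark's ball), deduce that the actual terminating radius satisfies $r' \le r^{\ast} < 4d_{\mathrm{crit}}$, and then verify the structure at $r'$ using exactly the argument you arrive at near the end---every ball in $G_B$ has more than $b$ points, hence contains a good point, and the $16d_{\mathrm{crit}}$ inter-core separation forces all balls in one component to overlap the \emph{same} good set directly (no chain argument needed), yielding $d(c_i^{\ast},l) \le d(c_i^{\ast},s^{\ast}) + d(s^{\ast},l) < d_{\mathrm{crit}} + r' < 5d_{\mathrm{crit}}$ for any $s^{\ast} \in B_l \cap X_i$. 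Your monotonicity worries about the component count are unnecessary: the paper never tracks how that count evolves, only that at termination $r' < 4d_{\mathrm{crit}}$, after which a pigeonhole ($k$ components, $k$ good sets, each $|X_i| > b$, at most $b$ points excluded) gives the bijection.
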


\begin{proof}
Lemma~\ref{lemma:NoOverlap} argues that since the good sets $X_{i}$ are well-separated, for $r < 4
d_{\mathrm{crit}}$ no ball of radius $r$ can overlap more than one $X_{i}$, and
two balls that overlap different $X_{i}$ cannot share any points.  Moreover,
since we only consider balls that have more than $b$ points in them, and the
number of bad points is at most $b$, each ball in $G_{B}$ must overlap some good
set.  Lemma~\ref{lemma:BallContainsGoodSet} argues that since there is a landmark near each good set, there
is a value of $r^{\ast} < 4 d_{\mathrm{crit}}$ such that each $X_{i}$ is
contained in some ball around a landmark of radius $r^{\ast}$.  We can use these
facts to argue for the correctness of the algorithm.

First we observe that for $r = r^{\ast}$, $G_{B} $ has exactly $k$ components and
each good set $X_{i}$ is contained within a distinct component.  Each ball in
$G_{B}$ overlaps with some $X_{i}$, and by Lemma~\ref{lemma:NoOverlap}, since $r^{\ast} < 4
d_{\mathrm{crit}}$, we know that each ball in $G_{B}$ overlaps with exactly one
$X_{i}$.  From Lemma~\ref{lemma:NoOverlap} we also know that balls that overlap different $X_{i}$
cannot share any points and are thus not connected in $G_{B}$.  Therefore balls
that overlap different $X_{i}$ will be in different components in $G_{B}$.
Moreover, by Lemma~\ref{lemma:BallContainsGoodSet} each $X_{i}$ is contained in some ball of radius
$r^{\ast}$.  For each good set $X_{i}$ let us designate by $B_{i}$ a ball that
contains all the points in $X_{i}$ (Figure~\ref{fig:figure2}), which is in $G_{B}$ since the
size of each good set satisfies $\vert X_{i} \vert > b$.  Any ball in $G_{B}$
that overlaps $X_{i}$ will be connected to $B_{i}$, and will thus be in the same
component as $B_{i}$.  Therefore for $r = r^{\ast}$, $G_{B}$ has exactly $k$
components, one for each good set $X_{i}$ that contains all the points in
$X_{i}$.

\begin{figure}
\begin{center}
\includegraphics[width=70mm,height=50mm]{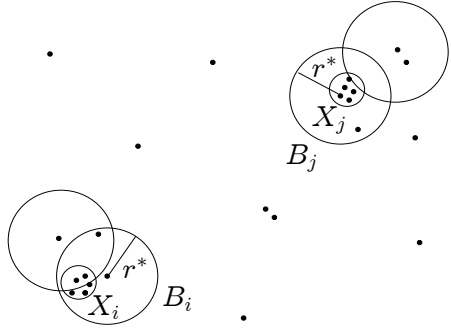}
\caption{Balls $B_{i}$ and $B_{j}$ of radius $r^{\ast}$ are shown, which contain good sets $X_{i}$ and $X_{j}$, respectively.  The radius of the balls is small in comparison to the distance between the good sets. \label{fig:figure2}}
\end{center}
\end{figure}

Since there are at least $n-b$ good points that are in some $X_{i}$, this means
that for $r = r^{\ast}$ the number of points that are in some ball in $G_{B}$
(which are in $\Clustered$) is at least $n-b$.  Hence the condition in line 4 of \emph{Expand-Landmarks}
will be satisfied and the algorithm will terminate and return a $k$-clustering in
which each cluster contains points from a distinct good set $X_{i}$.

Now let us suppose that we start with $r=0$.  Consider the first value of $r=r'$
for which the condition in line 4 is satisfied.  At this point $G_{B}$ has
exactly $k$ components and the number of points that are not in these components
is at most $b$.  It must be the case that $r' \le r^{\ast} < 4 d_{\mathrm{crit}}$
because we know that the condition is satisfied for $r = r^{\ast}$, and we are
considering all relevant values of $r$ in ascending order.  As before, each ball
in $G_{B}$ must overlap some good set $X_{i}$.  Again using Lemma~\ref{lemma:NoOverlap} we argue
that since $r < 4 d_{\mathrm{crit}}$, no ball can overlap more than one $X_{i}$
and two balls that overlap different $X_{i}$ cannot share any points.  It follows
that each component of $G_{B}$ contains points from a single $X_{i}$ (so we
cannot merge the good sets).  Moreover, since the size of each good set satisfies
$\vert X_{i} \vert > b$, and there are at most $b$ points left out of $G_{B}$,
each component must contain points from a distinct $X_{i}$ (so we cannot split
the good sets).  Thus we will return a $k$-clustering in which each cluster
contains points from a distinct good set $X_{i}$.

To prove the second part of the statement, let $\sigma$ be a bijection matching
each good set $X_{i}$ to the cluster $C'_{\sigma(i)}$ containing points from
$X_{i}$.  Clearly, $C'_{\sigma(i)}$ is made up of points in balls of radius $r <
4 d_{\mathrm{crit}}$ that overlap $X_{i}$.  Consider any such ball $B_{l}$ around
landmark $l$ and let $s^{\ast}$ denote any point on which $B_{l}$ and $X_{i}$
overlap.  By the triangle inequality, the distance between $c_{i}^{\ast}$ and $l$ satisfies
$d(c_{i}^{\ast},l) \le d(c_{i}^{\ast},s^{\ast}) + d(s^{\ast},l) < d_{\mathrm{crit}} + r < 5 d_{\mathrm{crit}}.$  Therefore the distance between $c_{i}^{\ast}$ and any landmark $l \in
C'_{\sigma(i)}$ satisfies $d(c_{i}^{\ast},l) < 5 d_{\mathrm{crit}}$.
\end{proof}

\begin{lemma}\label{lemma:NoOverlap}
A ball of radius $r < 4 d_{\mathrm{crit}}$ cannot contain points from more than
one good set $X_{i}$, and two balls of radius $r < 4 d_{\mathrm{crit}}$ that
overlap different $X_{i}$ cannot share any points.
\end{lemma}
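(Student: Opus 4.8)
The plan is to prove Lemma~\ref{lemma:NoOverlap} by exploiting the separation estimate established earlier in the excerpt, namely that any two points lying in different good sets are at distance greater than $16 d_{\mathrm{crit}}$, while any two points within the same good set are at distance less than $2 d_{\mathrm{crit}}$. These two bounds, together with the triangle inequality, should give both halves of the statement almost immediately.

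First I would handle the single-ball claim. Suppose for contradiction that a ball $B_l$ of radius $r < 4 d_{\mathrm{crit}}$ centered at a landmark $l$ contains a point $x \in X_i$ and a point $y \in X_j$ with $i \ne j$. Then by the triangle inequality $d(x,y) \le d(x,l) + d(l,y) \le r + r < 8 d_{\mathrm{crit}}$. But $x$ and $y$ lie in different good sets, so $d(x,y) > 16 d_{\mathrm{crit}}$, a contradiction. Hence $B_l$ can meet at most one good set.

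Next I would handle the two-ball claim. Let $B_{l_1}$ and $B_{l_2}$ be balls of radius $r < 4 d_{\mathrm{crit}}$ that overlap different good sets, say $B_{l_1}$ contains a point $x \in X_i$ and $B_{l_2}$ contains a point $y \in X_j$ with $i \ne j$. Suppose for contradiction there is a common point $z \in B_{l_1} \cap B_{l_2}$. Then $d(x,z) \le d(x,l_1) + d(l_1,z) < 2r < 8 d_{\mathrm{crit}}$ and similarly $d(z,y) < 8 d_{\mathrm{crit}}$, so $d(x,y) \le d(x,z) + d(z,y) < 16 d_{\mathrm{crit}}$, again contradicting the fact that points in distinct good sets are more than $16 d_{\mathrm{crit}}$ apart. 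Therefore no such $z$ exists.

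I do not expect any real obstacle here; the lemma is essentially a packing argument, and the only mild subtlety is keeping track of the constants (a ball of radius $r$ has diameter at most $2r$, so one ball gives a $2r < 8 d_{\mathrm{crit}}$ bound and two overlapping balls give a $4r$-type bound that — because we route through the shared point — still comes out as $16 d_{\mathrm{crit}}$, exactly matching the separation threshold). It is worth remarking that the constant $16$ in the good-set separation and the constant $4$ in the radius bound are chosen precisely so that $4 \cdot 4 = 16$, which is what makes the second part tight; this is the one place to be careful rather than hard.
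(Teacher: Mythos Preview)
Your proof is correct and follows essentially the same argument as the paper's own proof: both parts proceed by contradiction, using the triangle inequality to bound $d(x,y)$ by $2r<8d_{\mathrm{crit}}$ in the single-ball case and by $4r<16d_{\mathrm{crit}}$ in the two-ball case (the paper writes the four-term chain $d(x,l_1)+d(l_1,z)+d(z,l_2)+d(l_2,y)$ directly, while you split it into two applications of the triangle inequality through $z$, but this is purely cosmetic). One tiny notational slip: $d(x,l_1)+d(l_1,z)\le 2r$ rather than $<2r$, but since $r<4d_{\mathrm{crit}}$ is strict the conclusion $d(x,z)<8d_{\mathrm{crit}}$ is unaffected.
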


\begin{proof}
To prove the first part, consider a ball $B_{l}$ of radius $r < 4
d_{\mathrm{crit}}$ around landmark $l$.  In other words, $B_{l} = \lbrace s \in S
\mid d(s,l) \le r \rbrace$.  If $B_{l}$ overlaps more than one good set, then it
must have at least two points from different good sets $x \in X_{i}$ and $y \in
X_{j}$.  By the triangle inequality it follows that $d(x,y) \le d(x,l) + d(l,y)
\le 2r < 8 d_{\mathrm{crit}}$.  However, we know that $d(x,y) > 16
d_{\mathrm{crit}}$, giving a contradiction.

To prove the second part, consider two balls $B_{l_{1}}$ and $B_{l_{2}}$ of
radius $r < 4 d_{\mathrm{crit}}$ around landmarks $l_{1}$ and $l_{2}$.  In other
words, $B_{l_{1}} = \lbrace s \in S \mid d(s,l_{1}) \le r \rbrace$, and
$B_{l_{2}} = \lbrace s \in S \mid d(s,l_{2}) \le r \rbrace$.  Assume that
they overlap with different good sets $X_{i}$ and $X_{j}$: $B_{l_{1}} \cap X_{i}
\ne \emptyset$ and $B_{l_{2}} \cap X_{j} \ne \emptyset$.  For the purpose of
contradiction, let's assume that $B_{l_{1}}$ and $B_{l_{2}}$ share at least one
point: $B_{l_{1}} \cap B_{l_{2}} \ne \emptyset$, and use $s^{\ast}$ to refer to
this point.  By the triangle inequality, it follows that the distance between any
point $x \in B_{l_{1}}$ and $y \in B_{l_{2}}$ satisfies $d(x,y) \le d(x,s^{\ast})
+ d(s^{\ast},y) \le \lbrack d(x,l_{1}) + d(l_{1},s^{\ast}) \rbrack + \lbrack
d(s^{\ast},l_{2}) + d(l_{2},y) \rbrack \le 4r < 16 d_{\mathrm{crit}}.$

Since $B_{l_{1}}$ overlaps with $X_{i}$ and $B_{l_{2}}$ overlaps with $X_{j}$, it
follows that there is a pair of points $x \in X_{i}$ and $y \in X_{j}$ such that
$d(x,y) < 16 d_{\mathrm{crit}}$, a contradiction.  Therefore if
$B_{l_{1}}$ and $B_{l_{2}}$ overlap different good sets, $B_{l_{1}} \cap
B_{l_{2}} = \emptyset$.
\end{proof}

\begin{lemma}\label{lemma:BallContainsGoodSet}
Given a set of landmarks $L$ chosen by Landmark-Selection
so that the condition in Lemma~\ref{lemma:PointCoverage} is satisfied, there is some value of $r^{\ast} < 4 d_{\mathrm{crit}}$ such
that each $X_{i}$ is contained in some ball $B_{l}$ around landmark $l \in L$ of
radius $r^{\ast}$.
\end{lemma}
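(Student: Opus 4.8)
The plan is to obtain the claim by a direct application of the triangle inequality, combining the two structural facts already in hand: that any two points in the same good set are within $2d_{\mathrm{crit}}$ of each other, and that, by Lemma~\ref{lemma:PointCoverage}, some landmark lies within $2d_{\mathrm{crit}}$ of a point of each good set.

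First I would fix, for each good set $X_i$, a landmark $l_i \in L$ together with a point $x_i \in X_i$ satisfying $d(l_i, x_i) < 2 d_{\mathrm{crit}}$; such a pair exists precisely because the set $L$ is assumed to satisfy the condition of Lemma~\ref{lemma:PointCoverage}. Then for an arbitrary $y \in X_i$ I would write $d(l_i, y) \le d(l_i, x_i) + d(x_i, y)$, and use the already-established bound $d(x_i, y) < 2 d_{\mathrm{crit}}$ (since $x_i$ and $y$ lie in the same good set) to conclude $d(l_i, y) < 4 d_{\mathrm{crit}}$. Thus every point of $X_i$ is at distance strictly less than $4 d_{\mathrm{crit}}$ from $l_i$.

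Next I would set $r^{\ast} = \max_i \max_{y \in X_i} d(l_i, y)$. Since $S$ is finite, this is a maximum of finitely many numbers, each strictly below $4 d_{\mathrm{crit}}$, hence $r^{\ast} < 4 d_{\mathrm{crit}}$. By the choice of $r^{\ast}$ we have $d(l_i, y) \le r^{\ast}$ for every $i$ and every $y \in X_i$, so each $X_i$ is contained in the ball $B_{l_i} = \{ s \in S \mid d(s, l_i) \le r^{\ast} \}$ of radius $r^{\ast}$ around $l_i \in L$, which is exactly the assertion of the lemma.

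There is no genuine obstacle in this argument; the only point that deserves a moment's attention is that $r^{\ast}$ must be \emph{strictly} smaller than $4 d_{\mathrm{crit}}$, which is why one takes a maximum over the finite set $S$ rather than a supremum — a finite maximum of quantities each $< 4 d_{\mathrm{crit}}$ stays $< 4 d_{\mathrm{crit}}$. (Equivalently, one could simply fix any value $r^{\ast}$ with $\max_i \max_{y \in X_i} d(l_i, y) \le r^{\ast} < 4 d_{\mathrm{crit}}$.)
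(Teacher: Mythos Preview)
Your argument is correct and is essentially identical to the paper's proof: the paper also fixes a landmark $l_i$ within $2d_{\mathrm{crit}}$ of some $s_i\in X_i$, applies the triangle inequality together with the $2d_{\mathrm{crit}}$ diameter bound on $X_i$ to get $d(l_i,x)<4d_{\mathrm{crit}}$ for all $x\in X_i$, and then sets $r^{\ast}=\max_{l_i}\max_{x\in X_i}d(l_i,x)$. Your added remark that the strict inequality survives because the maximum is over a finite set is a nice clarification the paper leaves implicit.
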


\begin{proof}
For each good set $X_{i}$ choose a point $s_{i} \in X_{i}$ and a landmark $l_{i}
\in L$ that satisfy $d(s_{i},l_{i}) < 2 d_{\mathrm{crit}}$.  The distance between $l_{i}$ and each point
$x \in X_{i}$ satisfies $d(l_{i},x) \le d(l_{i},s_{i}) + d(s_{i},x) < 2
d_{\mathrm{crit}} + 2 d_{\mathrm{crit}} = 4 d_{\mathrm{crit}}.$

Consider $r^{\ast} = \textrm{max}_{l_{i}} \textrm{max}_{x \in X_{i}} d(l_{i},x)$.
Clearly, each $X_{i}$ is contained in a ball $B_{l_{i}}$ of radius $r^{\ast}$
and $r^{\ast} < 4 d_{\mathrm{crit}}$.
\end{proof}

\begin{lemma}\label{lemma:CloseToOwnLandmark}
Suppose the distance between $c^{\ast}_{i}$ and any landmark $l$ in
$C'_{\sigma(i)}$ satisfies $d(c^{\ast}_{i},l) < 5 d_{\mathrm{crit}}$.  Then given
point $x \in C^{\ast}_{i}$ that satisfies $w_{2}(x) - w(x) \ge 17
d_{\mathrm{crit}}$, for any $l_{1} \in C'_{\sigma(i)}$ and $l_{2} \in
C'_{\sigma(j \ne i)}$ it must be the case that $d(x,l_{1}) < d(x,l_{2})$.
\end{lemma}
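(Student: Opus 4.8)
The plan is to bound $d(x,l_1)$ from above and $d(x,l_2)$ from below, using the triangle inequality together with the hypothesis $d(c^\ast_i,l) < 5d_{\mathrm{crit}}$ for every landmark $l$ in $C'_{\sigma(i)}$, and the fact that $x \in C^\ast_i$ with the separation condition $w_2(x) - w(x) \ge 17 d_{\mathrm{crit}}$. First I would record the two elementary facts about $x$: since $x \in C^\ast_i$ we have $d(x,c^\ast_i) = w(x)$, and the separation gives $d(x,c^\ast_m) \ge w_2(x) \ge w(x) + 17 d_{\mathrm{crit}}$ for every $m \ne i$ (because $w_2(x)$ is $x$'s distance to the second-closest among the optimal centers, hence a lower bound on the distance to any non-closest center; and $w(x) = d(x,c_i^\ast)$ since $c_i^\ast$ is the closest). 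Actually I should be slightly careful: $c_i^\ast$ is $x$'s closest center precisely because $x\in C_i^\ast$, so $w(x)=d(x,c_i^\ast)$ and $w_2(x)=\min_{m\ne i}d(x,c_m^\ast)$; that is all I need.

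The upper bound: for $l_1 \in C'_{\sigma(i)}$, the triangle inequality gives
\[
 d(x,l_1) \le d(x,c^\ast_i) + d(c^\ast_i,l_1) < w(x) + 5 d_{\mathrm{crit}}.
\]
The lower bound: $l_2 \in C'_{\sigma(j)}$ for some $j \ne i$, and by the previous lemma's conclusion applied with index $j$, $d(c^\ast_j,l_2) < 5 d_{\mathrm{crit}}$; hence
\[
 d(x,l_2) \ge d(x,c^\ast_j) - d(c^\ast_j,l_2) > \big(w(x)+17 d_{\mathrm{crit}}\big) - 5 d_{\mathrm{crit}} = w(x) + 12 d_{\mathrm{crit}}.
\]
Comparing, $d(x,l_1) < w(x) + 5 d_{\mathrm{crit}} < w(x) + 12 d_{\mathrm{crit}} < d(x,l_2)$, which is exactly the claim.

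There is essentially no hard step here — it is a two-sided triangle-inequality estimate — but the one place to be vigilant is the indexing of $\sigma$: I must use that the bound $d(c^\ast_j,l_2) < 5 d_{\mathrm{crit}}$ holds for the cluster $C'_{\sigma(j)}$ that contains $X_j$, and that $l_2 \in C'_{\sigma(j\ne i)}$ in the statement indeed means $l_2$ lies in the image under $\sigma$ of some good set $X_j$ with $j \ne i$, so the hypothesis of the lemma (inherited from the preceding lemma) applies to it. The other subtlety is confirming that the numerical slack is genuine: the gap between the $5 d_{\mathrm{crit}}$ in the upper bound and the $12 d_{\mathrm{crit}}$ in the lower bound is strictly positive, so the strict inequality $d(x,l_1) < d(x,l_2)$ is obtained without any boundary cases. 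No probabilistic argument, no appeal to the $(1+\alpha,\epsilon)$-property, and no use of the ball-expansion dynamics is needed for this lemma — it is purely a geometric consequence of the separation of cluster cores and the landmark-to-center proximity established earlier.
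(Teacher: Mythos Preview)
Your proof is correct and follows essentially the same approach as the paper: you establish the upper bound $d(x,l_1) < w(x) + 5d_{\mathrm{crit}}$ via the triangle inequality through $c^\ast_i$, the lower bound $d(x,l_2) > w(x) + 12d_{\mathrm{crit}}$ via the reverse triangle inequality through $c^\ast_j$ together with $w_2(x) \ge w(x) + 17d_{\mathrm{crit}}$, and compare. Your extra care about why $w(x)=d(x,c^\ast_i)$ and why the $5d_{\mathrm{crit}}$ hypothesis also applies to the index $j$ is well placed, but the argument itself is the same two-line triangle-inequality computation the paper gives.
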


\begin{proof}
We will show that $d(x,l_{1}) < w(x) + 5d_{\mathrm{crit}}$ \textbf{(1)}, and $d(x,l_{2}) > w(x) + 12d_{\mathrm{crit}}$ \textbf{(2)}.  This implies that $d(x,l_{1}) < d(x,l_{2})$.

To prove \textbf{(1)}, by the triangle inequality $d(x,l_{1}) \le d(x,c^{\ast}_{i}) + d(c^{\ast}_{i},l_{1}) = w(x) + d(c^{\ast}_{i},l_{1}) < w(x) + 5 d_{\mathrm{crit}}$.  To prove \textbf{(2)}, by the triangle inequality $d(x,c_{j}^{\ast}) \le d(x,l_{2}) + d(l_{2},c^{\ast}_{j})$.  It follows that $d(x,l_{2}) \ge d(x,c_{j}^{\ast}) - d(l_{2},c^{\ast}_{j})$.  Since $d(x,c_{j}^{\ast}) \ge w_{2}(x)$ and $d(l_{2},c^{\ast}_{j}) < 5 d_{\mathrm{crit}}$ we have
\begin{equation}
d(x,l_{2}) > w_{2}(x) - 5 d_{\mathrm{crit}}.
\end{equation}
Moreover, since $w_{2}(x) - w(x) \ge 17 d_{\mathrm{crit}}$ we have
\begin{equation}
w_{2}(x) \ge 17 d_{\mathrm{crit}} + w(x).
\end{equation}
Combining Equations 1 and 2 it follows that $d(x,l_{2}) > 17 d_{cri} + w(x) - 5 d_{\mathrm{crit}} = w(x) + 12 d_{\mathrm{crit}}$.
\end{proof}

\begin{proof}[Theorem~\ref{thm:Main}]
After using Landmark-Selection to choose $O(k + \ln \frac{1}{\delta})$ points, with probability at least $1-\delta$ there is a landmark closer than $2 d_{\mathrm{crit}}$ to some point in each good set.  Given a set of landmarks with this property, each cluster in the clustering $C' = \lbrace C'_{1},C'_{2}, \ldots C'_{k} \rbrace$ output by \emph{Expand-Landmarks} contains points from a distinct good set $X_{i}$. This clustering can exclude up to $b$ points, all of which may be good.  Nonetheless, this means that $C'$ may disagree with $C^{\ast}$ on only the bad points and at most $b$ good points.  The number of points that $C'$ and $C^{\ast}$ disagree on is therefore at most $2b = O(\epsilon n /\alpha)$.  Thus, $C'$ is at least $O(\epsilon/\alpha)$-close to $C^{\ast}$, and at least $O(\epsilon/\alpha + \epsilon)$-close to $C_{T}$.

Moreover, $C'$ has an additional property that allows us to find a clustering
that is $\epsilon$-close to $C_{T}$.  If we use $\sigma$ to denote a bijection
mapping each good set $X_{i}$ to the cluster $C'_{\sigma(i)}$ containing points
from $X_{i}$, any landmark $l \in C'_{\sigma(i)}$ is closer than $5 d_{\mathrm{crit}}$ to $c^{\ast}_{i}$.   We can use this observation to find all points that satisfy one of the properties of the
good points: points $x$ such that $w_{2}(x) - w(x) \ge 17 d_{\mathrm{crit}}$.
Let us call these points the \emph{detectable} points.  To clarify, the
detectable points are those points that are much closer to their own
cluster center than to any other cluster center in $C^{\ast}$, and the
\emph{good} points are a subset of the detectable points that are also
very close to their own cluster center.

To find the detectable points using $C'$, we choose some landmark $l_{i}$
from each $C'_{i}$.  For each point $x \in S$, we then insert $x$ into the
cluster $C''_{j}$ for $j$ = argmin$_{i}d(x,l_{i})$.  Lemma~\ref{lemma:CloseToOwnLandmark} argues that each detectable point in $C^{\ast}_{i}$ is closer to every landmark in $C'_{\sigma(i)}$ than to any landmark in $C'_{\sigma(j \ne i)}$.  It follows that $C''$ and $C^{\ast}$ agree on all the detectable points.  Since there are fewer than $(\epsilon - \epsilon^{\ast})n$ points on which $C_{T}$ and $C^{\ast}$ agree that are not detectable, it follows that $\dist(C'',C_{T}) < (\epsilon - \epsilon^{\ast}) + \dist(C_{T},C^{\ast}) = (\epsilon - \epsilon^{\ast}) + \epsilon^{\ast} = \epsilon$.

Therefore using $O(k + \ln \frac{1}{\delta})$ landmarks we get an accurate clustering with probability at least $1-\delta$.  The runtime of \emph{Landmark-Selection} is $O(\vert L \vert n)$, where $\vert L \vert$ is the number of landmarks.  Using a min-heap to store all landmark-point pairs and a disjoint-set data structure to keep track of the connected components of $G_{B}$, \emph{Expand-Landmarks} can be implemented in $O(\vert L \vert n \log n)$ time.  A detailed description of this implementation is given in the next section.  The last part of our procedure takes $O(kn)$ time, so the runtime of our implementation is $O(\vert L \vert n \log n)$.  Therefore to get an accurate clustering with probability $1-\delta$ the runtime of our algorithm is $O((k + \ln \frac{1}{\delta}) n \log n)$.  Moreover, we only consider the distances between the landmarks and other points, so we only use $O(k + \ln \frac{1}{\delta})$ one versus all distance queries.
\end{proof}

\section{Implementation of Expand-Landmarks}

In order to efficiently expand balls around landmarks, we build a min-heap $H$ of landmark-point pairs $(l,s)$, where the key of each pair is the distance between $l$ and $s$.  In each iteration we find $(l^{\ast},s^{\ast})$ = $H$.deleteMin(), and then add $s^{\ast}$ to items($l^{\ast}$), which stores the points in $B_{l^{\ast}}$.  We store points that have been clustered (points in balls of size larger than $s_{\min}$) in the set Clustered.

Our implementation assigns each clustered point $s$ to a ``representative'' landmark, denoted by $l(s$).  The representative landmark of $s$ is the landmark $l$ of the first large ball $B_{l}$ that contains $s$.  To efficiently update the components of $G_{B}$, we maintain a disjoint-set data structure $U$ that contains sets corresponding to the connected components of $G_{B}$, where each ball $B_{l}$ is represented by landmark $l$.  In other words, $U$ contains a set $\lbrace l_{1},l_{2},\ldots,l_{i} \rbrace$ iff $B_{l_{1}}, B_{l_{2}}, \ldots,B_{l_{i}}$ form a connected component in $G_{B}$.  For each large ball $B_{l}$ our algorithm considers all points $s \in B_{l}$ and performs Update-Components($l,s$), which works as follows.  If $s$ does not have a representative landmark we assign it to $l$, otherwise $s$ must already be in $B_{l(s)}$, and we assign $B_{l}$ to the same component as $B_{l(s)}$.  If none of the points in $B_{l}$ are assigned to other landmarks, it will be in its own component.  A detailed description of the algorithm is given below.

\vspace{-0.6cm}

\begin{algorithm}[H]
\caption{Expand-Landmarks($s_{\min},n',L$)}
\begin{algorithmic}[1]
\STATE A = ();
\FOR{each $s \in S$}
\STATE $l(s)$ = null;
\FOR{each $l \in L$}
\STATE A.add$((l,s),d(l,s))$;
\ENDFOR
\ENDFOR
\STATE $H$ = build-heap($A$);
\FOR{each $l \in L$}
\STATE items($l$) = ();
\ENDFOR
\STATE Set Clustered = ();
\STATE U = ();
\WHILE{$H$.hasNext()}
\STATE  $(l^{\ast},s^{\ast})$ = $H$.deleteMin();
\STATE items($l^{\ast}$).add($s^{\ast}$);
\IF{items($l^{\ast}$).size() == $s_{\min}$}
\STATE Activate($l^{\ast}$);
\ENDIF
\IF{items($l^{\ast}$).size() $> s_{\min}$}
\STATE Update-Components($l^{\ast},s^{\ast}$);
\ENDIF
\IF{Clustered.size() $\ge n'$ and $U$.size() == $k$}
\RETURN Format-Clustering();
\ENDIF
\ENDWHILE
\RETURN \textbf{no-cluster};
\end{algorithmic}
\end{algorithm}

\vspace{-0.6cm}

\begin{algorithm}[H]
\caption{Update-Components($l,s$)}
\begin{algorithmic}[1]
\IF{$l(s)$ == null}
\STATE $l(s)$ = $l$;
\ELSE
\STATE $c_{1}$ = $U$.find($l$);
\STATE $c_{2}$ = $U$.find($l(s)$);
\STATE $U$.union($c_{1},c_{2}$);
\ENDIF
\end{algorithmic}
\end{algorithm}

\vspace{-0.6cm}

\begin{algorithm}[H]
\caption{Activate($l$)}
\begin{algorithmic}[1]
\STATE $U$.MakeSet($l$);
\FOR{each $s \in$ items($l$)}
\STATE Update-Components($l,s$);
\STATE Clustered.add($s$);
\ENDFOR
\end{algorithmic}
\end{algorithm}


\begin{algorithm}[H]
\caption{Format-Clustering()}
\begin{algorithmic}[1]
\STATE C = ();
\FOR{each Set $L$ in $U$}
\STATE Set Cluster = ();
\FOR{each $l \in L$}
\FOR{each $s \in$ items($l$)}
\STATE Cluster.add($s$);
\ENDFOR
\ENDFOR
\STATE $C$.add(Cluster);
\ENDFOR
\RETURN $C$;
\end{algorithmic}
\end{algorithm}


During the execution of the algorithm the connected components of $G_{B}$ correspond to the sets of $U$ (where each ball $B_{l}$ is represented by landmark $l$).  Suppose that $B_{l_{1}}$ and $B_{l_{2}}$ are connected in $G_{B}$, then $B_{l_{1}}$ and $B_{l_{2}}$ must overlap on some point $s$.  Without loss of generality, suppose $s$ is added to $B_{l_{1}}$ before it is added to $B_{l_{2}}$.  When $s$ is added to $B_{l_{1}}$, $l(s)$ = $l_{1}$ if $s$ does not yet have a representative landmark (lines 1-2 of Update-Components), or $l(s)$ = $l'$ and both $l_{1}$ and $l'$ are put in the same set (lines 4-6 of Update-Components).  When $s$ is added to $B_{l_{2}}$, if $l(s)$ = $l_{1}$, then $l_{1}$ and $l_{2}$ will be put in the same set.  If $l(s)$ = $l'$, $l'$ and $l_{2}$ will be put in the same set, which also contains $l_{1}$.

It follows that whenever $B_{l_{1}}$ and $B_{l_{2}}$ are in the same connected component in $G_{B}$, $l_{1}$ and $l_{2}$ will be in the same set in $U$.  Moreover, if $B_{l_{1}}$ and $B_{l_{2}}$ are not in the same component in $G_{B}$, then $l_{1}$ and $l_{2}$ can never be in the same set in $U$  because both start in distinct sets (line 1 of Activate), and it is not possible for a set containing $l_{1}$ to be merged with a set containing $l_{2}$.

It takes $O(\vert L \vert n)$ time to build $H$ (linear in the size of the heap).  Each deleteMin() operation takes
$O(\log(\vert L \vert n))$ (logarithmic in the size of the heap), which is
equivalent to $O(\log(n))$ because $\vert L \vert \le n$.  If $U$ is implemented by a union-find algorithm
Update-Components takes amortized time of $O(\alpha(|L|)$, where $\alpha$ denotes the inverse
Ackermann function.  Moreover, Update-Components may only be called once for each iteration of the while loop in
Expand-Landmarks (it is either called immediately on $l^{\ast}$ and $s^{\ast}$ if
$B_{l^{\ast}}$ is large enough, or it is called when the ball grows large enough
in Activate).  All other operations also take time proportional to the number of
landmark-point pairs.  So the runtime of this algorithm is $O(\vert L \vert n)$ +
$\iter \cdot O(\log n + \alpha(\vert L \vert))$, where $\iter$ is the number of iterations of the while
loop.  As the number of iterations is bounded by $|L|n$, and $\alpha(\vert L \vert)$ is effectively constant,
this gives a worst-case running time of $O(\vert L
\vert n \log n)$.

\section{Empirical Study}

We use our \emph{Landmark Clustering} algorithm to cluster proteins using
sequence similarity.  As mentioned in the Introduction, one versus all distance
queries are particularly relevant in this setting because of sequence database
search programs such as BLAST \citep{blast}  (Basic Local Alignment Search Tool).  BLAST aligns the queried sequence to sequences in the database, and produces a
``bit score'' for each alignment, which is a measure of its quality (we invert
the bit score to make it a distance).  However, BLAST does not consider
alignments with some of the sequences in the database, in which case we assign
distances of infinity to the corresponding sequences.  We observe that if we
define distances in this manner they almost form a metric in practice: when we draw triplets of sequences at random and check the distances between them the triangle inequality is almost always satisfied.  Moreover, BLAST is very successful at detecting sequence homology in large sequence databases, therefore it is
plausible that clustering using these distances satisfies the $(c,\epsilon)$-property
for some relevant clustering $C_{T}$.

We perform experiments on datasets obtained from two classification databases: Pfam \citep{pfam}, version 24.0,
October 2009; and SCOP \citep{scop}, version 1.75, June 2009.  Both
of these sources classify proteins by their evolutionary
relatedness, therefore we can use their classifications as a ground truth to
evaluate the clusterings produced by our algorithm and other methods.


Pfam classifies proteins using hidden Markov models (HMMs) that represent multiple sequence alignments. There are two levels in
the Pfam classification hierarchy: family and clan.  In our clustering
experiments we compare with a classification at the family level because the
relationships at the clan level are less likely to be discerned with sequence
alignment.  In each experiment we randomly select several large families (of size
between 1000 and 10000) from Pfam-A (the manually curated part of the classification), retrieve the sequences of the proteins in these
families, and use our \emph{Landmark-Clustering} algorithm to cluster the
dataset.


SCOP groups proteins on the basis of their 3D structures, so it only classifies
proteins whose structure is known.  Thus the datasets from SCOP are much smaller
in size.  The SCOP classification is also hierarchical: proteins are grouped by
class, fold, superfamily, and family.  We consider the classification at the
superfamily level because this seems most appropriate given that we are only using sequence information.
As with the Pfam data, in each experiment we create a dataset by randomly
choosing several superfamilies (of size between 20 and 200), retrieve the
sequences of the corresponding proteins, and use our \emph{Landmark-Clustering}
algorithm to cluster the dataset.

Once we cluster a particular dataset, we compare the clustering to the manual classification using the distance measure from the theoretical part of our work.  To find the fraction of misclassified points under the optimal matching of clusters in $C$ to clusters in $C'$ we solve a minimum weight bipartite matching problem where the cost of matching $C_{i}$ to $C'_{\sigma(i)}$ is $\vert C_{i} - C'_{\sigma(i)} \vert / n$.  In addition, we compare clusterings to manual classifications using the F-measure, which was used in another study of clustering protein sequences \citep{spectralClusteringProteinSeqs}.  The F-measure gives a score between 0 and 1, where 1 indicates
an exact match between the two clusterings (see Appendix A).
The F-measure has also been used in other studies \citep[see][]{fMeasureUse}, and is related to our notion of distance (Lemma~\ref{lemma:clusteringDistanceRelationship} in Appendix A).

\subsection{Choice of Parameters}

To run \textit{Landmark-Clustering}, we set $k$ using the number of clusters in the ground truth clustering.  For each Pfam dataset we use $40k$ landmarks/queries, and for each SCOP dataset we use $30k$ landmarks/queries.  In addition, our algorithm uses three parameters $(q,s_{\min},n')$ whose value is set in the proof based on  $\alpha$ and $\epsilon$, assuming that the clustering instance satisfies the $(1+\alpha,\epsilon)$-property.  In
practice we must choose some value for each parameter.  In our experiments we set them as a function of the number of points in the dataset, and the number of clusters.  We set $q = 2n/k$, $s_{\min} = 0.05n/k$ for Pfam datasets, and $s_{min} = 0.1n/k$ for SCOP datasets, and $n' = 0.5n$.  Since the selection of landmarks is randomized, for each dataset we perform several clusterings, compare each to the ground truth, and report the median quality.

\emph{Landmark-Clustering} is most sensitive to the $s_{\min}$ parameter, and will not report a clustering if $s_{\min}$ is too small or too large.  We recommend trying several reasonable values of $s_{\min}$, in increasing or decreasing order, until you get a clustering and none of the clusters are too large.  If you get a clustering where one of the clusters is very large, this likely means that several ground truth clusters have been merged.  This may happen because $s_{\min}$ is too small causing balls of outliers to connect different cluster cores, or $s_{\min}$ is too large causing balls in different cluster cores to overlap.

The algorithm is less sensitive to the $n'$ parameter.  However, if you set $n'$ too large some ground truth clusters may be merged, so we recommend using a smaller value ($0.5n \le n' \le 0.7n$) because all of the points are still clustered during the last step.  Again, for some values of $n'$ the algorithm may not output a clustering, or output a clustering where some of the clusters are too large.  Our algorithm is least sensitive to the $q$ parameter.  Using more landmarks (if you can afford it) can make up for a poor choice of $q$.

\subsection{Results}


\begin{figure}
  \centering
  \subfloat[Comparison using fraction of misclassified points]{\label{fig:clusteringComparisonA}\includegraphics[width=80mm,height=60mm]{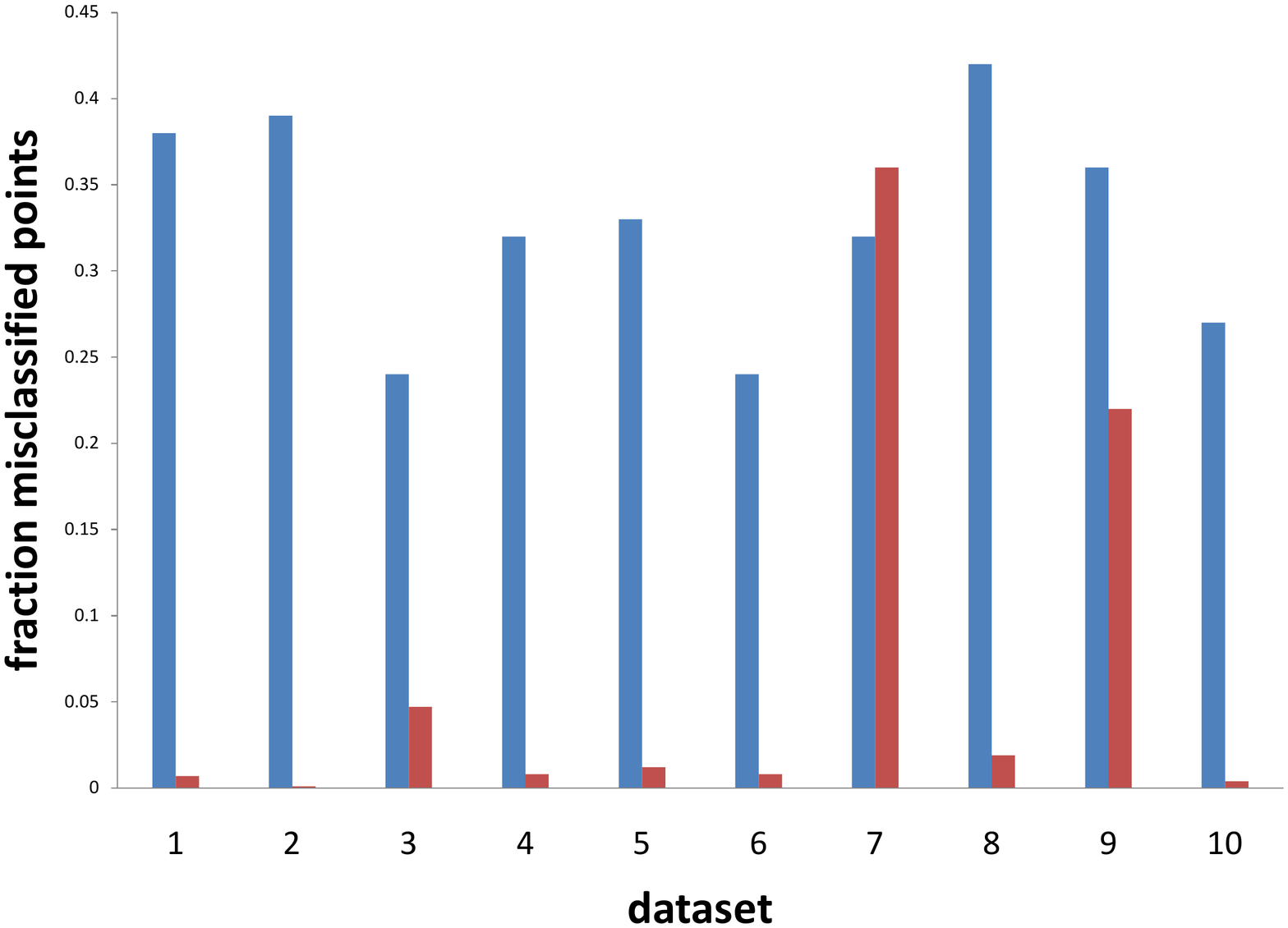}}
  \subfloat[Comparison using the F-measure]{\label{fig:clusteringComparisonB}\includegraphics[width=80mm,height=60mm]{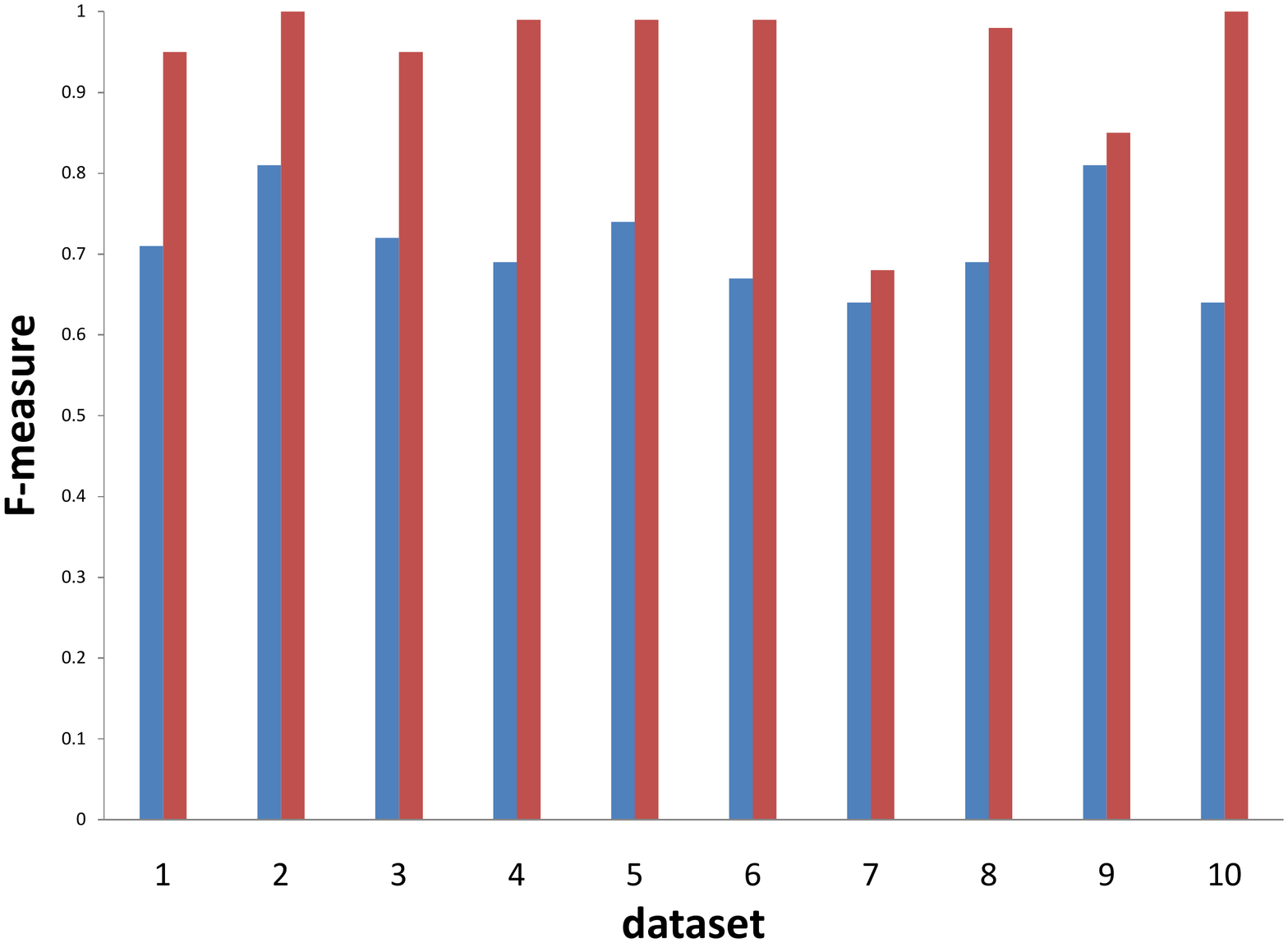}}
  \caption{Comparing the performance of $k$-means in the embedded space (blue) and \emph{Landmark-Clustering} (red) on 10 datasets from Pfam.  Datasets \textbf{1-10} are created by randomly choosing 8 families from Pfam of size $s$, $1000 \le s \le 10000$. \textbf{(a)} Comparison using the distance measure from the theoretical part of our work.  \textbf{(b)} Comparison using the F-measure.}
  \label{fig:figure3}
\end{figure}

Figure~\ref{fig:figure3} shows the results of our experiments on the Pfam
datasets.  One can see that for most of the datasets (other than datasets 7 and
9) we find a clustering that is almost identical to the ground truth.  These
datasets are very large, so as a benchmark for comparison we can only consider
algorithms that use a comparable amount of distance information (since we do not
have the full distance matrix).  A natural choice is the following algorithm: randomly choose a set of landmarks $L$, $\vert L \vert = d$; embed each point in a $d$-dimensional space using distances to $L$; use $k$-means clustering in this space (with distances given by the Euclidian norm).  Our embedding scheme is a Lipschitz embedding with singleton subsets \citep[see][]{virtualLandmarks}, which gives distances with low distortion for points near each other in a metric space.

Notice that this procedure uses exactly $d$ one versus all distance queries, so
we can set $d$ equal to the number of queries used by our algorithm.  We expect
this algorithm to work well, and if you look at Figure~\ref{fig:figure3} you can see that it finds reasonable clusterings.  Still, the clusterings reported by this algorithm do not closely match the Pfam classification, showing that our results are indeed significant.

Figure~\ref{fig:figure4} shows the results of our experiments on the SCOP datasets. These results are not as good, which is likely because the SCOP
classification at the superfamily level is based on biochemical and
structural evidence in addition to sequence evidence.
By contrast, the Pfam classification is based entirely on sequence information.
Still, because the SCOP datasets are much smaller, we can compare our algorithm
to methods that require distances between all the points.  In particular,
\citet{spectralClusteringProteinSeqs} showed that spectral clustering using sequence data works well when applied to the proteins in SCOP.  Thus we use the exact method described by \citet{spectralClusteringProteinSeqs} as a benchmark for comparison on the SCOP datasets.  Moreover, other than clustering
randomly generated datasets from SCOP, we also consider the two main examples
from Paccanaro et al., which are labeled \textbf{A} and
\textbf{B} in the figure.  From Figure~\ref{fig:figure4} we can see that the performance of
\emph{Landmark-Clustering} is comparable to that of the spectral method, which is
very good considering that the algorithm used by \citet{spectralClusteringProteinSeqs} significantly
outperforms other clustering algorithms on this data.  Moreover, the spectral clustering algorithm requires the full distance matrix as input, and takes much longer to run.


\begin{figure}
  \centering
  \subfloat[Comparison using fraction of misclassified points]{\label{fig:clusteringComparisonA}\includegraphics[width=80mm,height=60mm]{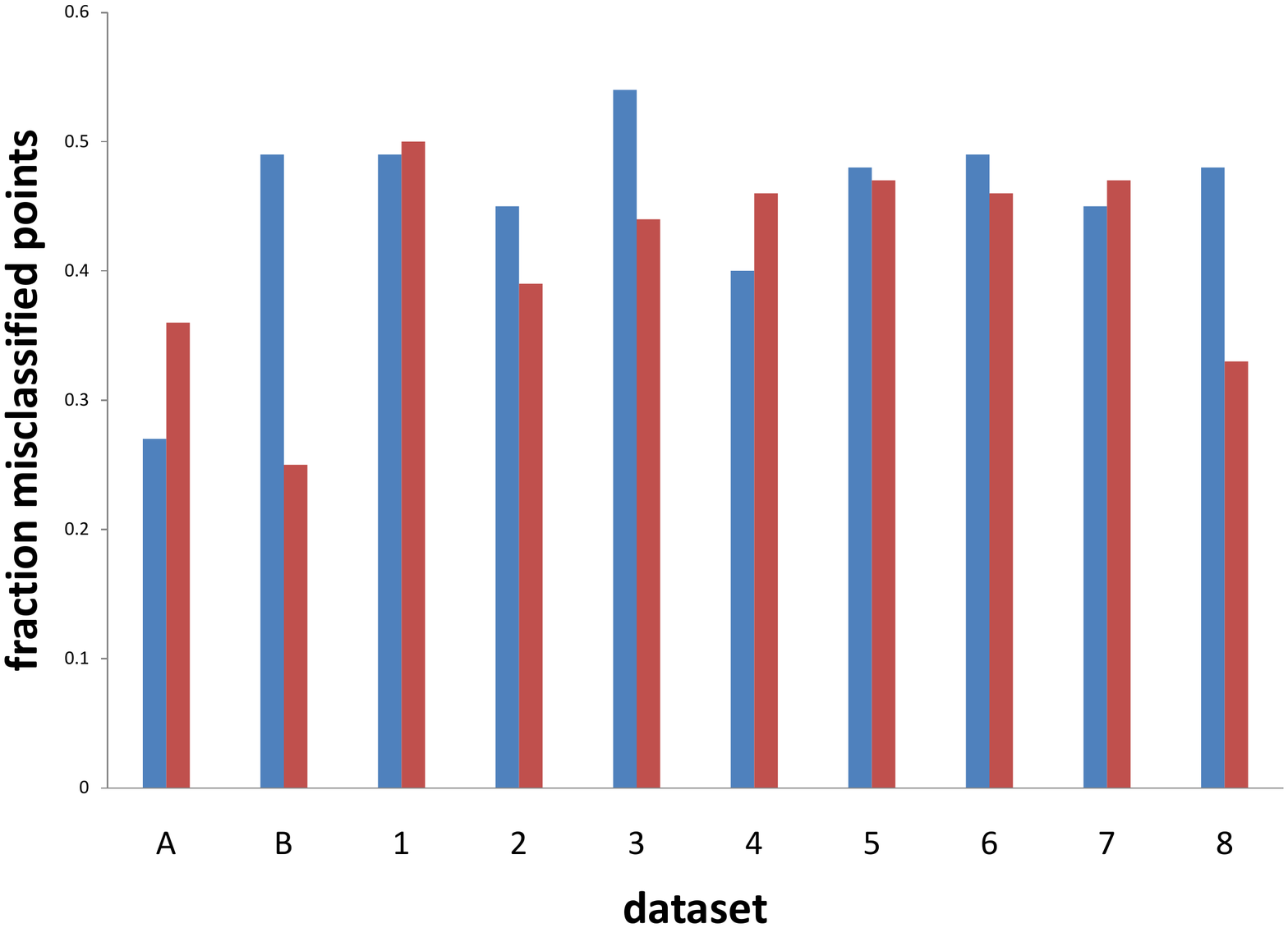}}
  \subfloat[Comparison using the F-measure]{\label{fig:clusteringComparisonB}\includegraphics[width=80mm,height=60mm]{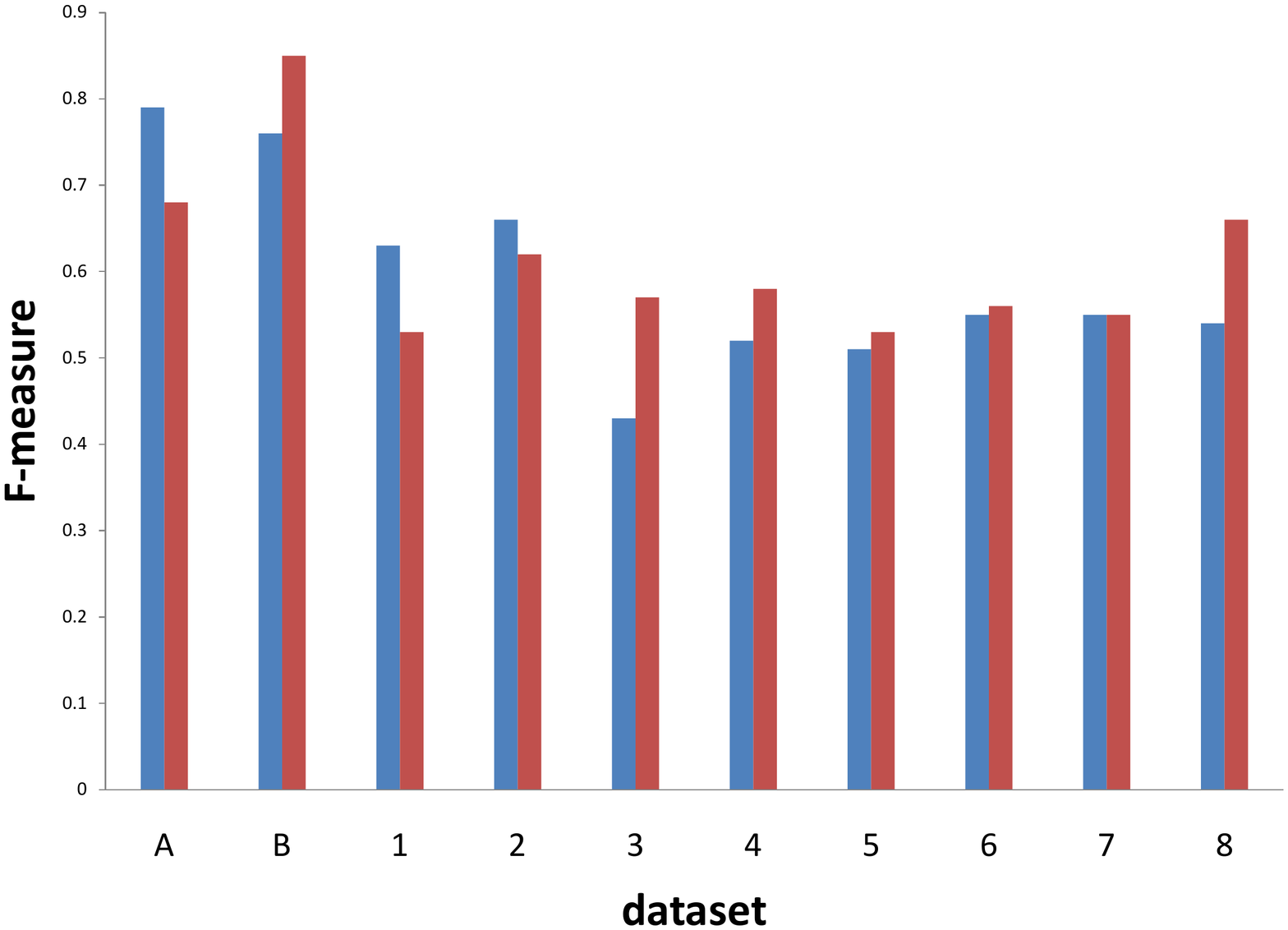}}
  \caption{Comparing the performance of spectral clustering (blue) and \emph{Landmark-Clustering} (red) on 10 datasets from SCOP.  Datasets \textbf{A} and \textbf{B} are the two main examples from \cite{spectralClusteringProteinSeqs}, the other datasets (\textbf{1-8}) are created by randomly choosing 8 superfamilies from SCOP of size $s$, $20 \le s \le 200$. \textbf{(a)} Comparison using the distance measure from the theoretical part of our work.  \textbf{(b)} Comparison using the F-measure.}
  \label{fig:figure4}
\end{figure}

\subsection{Testing the $(c,\epsilon)$ property}

To see whether the $(c,\epsilon)$ property is a reasonable assumption for our data, we look at whether our datasets have the structure implied by our assumption.  We do this by measuring the separation of the ground truth clusters in our datasets.  For each dataset in our study, we sample some points from each ground truth cluster.  We consider whether the sampled points are more similar to points in the same cluster than to points in other clusters.  More specifically, for each point we record the median within-cluster similarity, and the maximum between-cluster similarity.  If our datasets indeed have well-separated cluster cores, as implied by our assumption, then for a lot of the points the median within-cluster similarity should be significantly larger than the maximum between-cluster similarity.  We can see that this is indeed the case for the Pfam datasets.  However, this is not typically the case for the SCOP datasets, where most points have little similarity to the majority of the points in their ground truth cluster.  These observations explain our results on the two sets of data: we are able to accurately cluster the Pfam datasets, and our algorithm is much less accurate on the SCOP datasets.  The complete results of these experiments can be found at {http://cs-people.bu.edu/kvodski/clusteringProperties/description.html}.

\section{Conclusion and Open Questions}

In this work we presented a new algorithm for clustering large datasets with
limited distance information. As opposed to previous settings, our goal was not
to approximate some objective function like the $k$-median objective, but to find
clusterings close to the ground truth. We proved that our algorithm yields
accurate clusterings with only a small number of one versus all distance queries,
given a natural assumption about the structure of the clustering instance. This
assumption has been previously analyzed by \citet{bbg}, but in the full distance
information setting. By contrast, our algorithm uses only a small number of
queries, it is much faster, and it has effectively the same formal performance
guarantees as the one introduced by \citet{bbg}.

To demonstrate the practical use of our algorithm, we clustered protein sequences
using a sequence database search program as the one versus all query.  We
compared our results to gold standard manual classifications of protein
evolutionary relatedness given in Pfam \citep{pfam} and SCOP \citep{scop}. We find
that our clusterings are comparable in accuracy to the classification given in
Pfam. For SCOP our clusterings are as accurate as state of the art methods, which
take longer to run and require the full distance matrix as input.

Our main theoretical guarantee assumes large target clusters. It would be interesting to
design a provably correct algorithm for the case of small clusters as well.


\acks{Konstantin Voevodski was supported by an IGERT Fellowship through NSF grant DGE-0221680 awarded to the ACES Training Program at BU Center for Computational Science.  Maria Florina Balcan was supported in part by NSF grant CCF-0953192,
by ONR grant N00014-09-1-0751 and AFOSR grant FA9550-09-1-0538.
Heiko R{\"o}glin was supported by a Veni grant from the Netherlands Organisation
for Scientific Research.  Shang-Hua Teng was supported in part by NSF grant CCR-0635102.}


\newpage

\appendix
\section*{Appendix A.}
In this section we reproduce the definition of F-measure, which is another way to evaluate the distance between two clusterings.  We also show a relationship between our measure of distance and the F-measure.
\subsection*{A.1 F-measure}

The F-measure compares two clusterings $C$ and $C'$ by matching
each cluster in $C$ to a cluster in $C'$ using a harmonic mean of
Precision and Recall, and then computing a ``per-point'' average.  If we match
$C_{i}$ to $C'_{j}$, Precision is defined as $P(C_{i},C'_{j}) =
\frac{\vert C_{i} \cap C'_{j} \vert}{\vert C_{i} \vert}.$  Recall
is defined as $R(C_{i},C'_{j}) = \frac{\vert C_{i} \cap C'_{j}
\vert}{\vert C_{j} \vert}.$  For $C_{i}$ and $C'_{j}$ the harmonic mean of
Precision and Recall is then equivalent to $\frac{2 \cdot \vert C_{i}
\cap C'_{j} \vert}{\vert C_{i} \vert + \vert C'_{j} \vert}$, which we denote by $\pr(C_{i},C'_{j})$ to simplify notation.  The F-measure is then defined as
\begin{displaymath}
F(C,C') = \frac{1}{n}
\sum_{C_{i} \in C} \vert C_{i} \vert \max_{C'_{j} \in C'} \pr(C_{i},C'_{j}).
\end{displaymath}
Note that this quantity is between 0 and 1, where 1 corresponds to an
exact match between the two clusterings.

\begin{lemma}\label{lemma:clusteringDistanceRelationship}
Given two clusterings $C$ and $C'$, if $\dist(C,C') = d$ then F($C,C') \ge 1 - 3d/2$.
\end{lemma}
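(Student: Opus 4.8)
The plan is to fix the optimal matching $\sigma$ that realizes $\dist(C,C')=d$, bound $F(C,C')$ from below by the contribution of \emph{this particular} matching (discarding the advantage of choosing the best $C'_j$ for each $C_i$), and then prove a clean per-cluster inequality that charges each term against the points misclassified out of, and into, that cluster.

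First I would set up notation. Let $\sigma \in S_k$ attain the minimum in $\dist(C,C')$, and for each $i$ write $a_i = |C_i|$, $b_i = |C'_{\sigma(i)}|$, $c_i = |C_i \cap C'_{\sigma(i)}|$, $m_i = |C_i \setminus C'_{\sigma(i)}| = a_i - c_i$, and $m'_i = |C'_{\sigma(i)} \setminus C_i| = b_i - c_i$. By the definition of $\dist$ we have $\sum_i m_i = dn$. A short double-counting argument gives $\sum_i m'_i = dn$ as well: every point $x$ lies in a unique cluster $C_j$ and a unique cluster $C'_{\sigma(i)}$, and $x$ is counted once in $\sum_i m_i$ precisely when $j \ne i$, and once in $\sum_i m'_i$ under exactly the same condition; hence both sums equal the number of points on which $C$ and $C'$ disagree under $\sigma$, namely $dn$.

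Next, since $\pr(C_i,C'_j) = \frac{2|C_i\cap C'_j|}{|C_i|+|C'_j|}$, we have $F(C,C') = \frac1n\sum_i a_i \max_j \pr(C_i,C'_j) \ge \frac1n\sum_i a_i\,\pr(C_i,C'_{\sigma(i)}) = \frac1n\sum_i \frac{2a_i c_i}{a_i+b_i}$, so it suffices to lower bound each term. The key claim is $\frac{2a_i c_i}{a_i+b_i} \ge a_i - m_i - \tfrac12 m'_i$. Substituting $c_i = a_i - m_i$ and $b_i = a_i - m_i + m'_i$ and clearing the positive denominator $a_i+b_i = 2a_i - m_i + m'_i$, this reduces after expansion to $m_i(a_i - m_i) + \tfrac12 m_i m'_i + \tfrac12 (m'_i)^2 \ge 0$, which holds because $a_i \ge c_i = a_i - m_i \ge 0$ forces $m_i(a_i-m_i)\ge 0$ while the other two terms are manifestly nonnegative (the degenerate case $a_i=0$ is trivial, the term being $0$ and the right-hand side $\le 0$). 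Summing this per-cluster bound over $i$ and dividing by $n$ yields $F(C,C') \ge \frac1n\bigl(\sum_i a_i - \sum_i m_i - \tfrac12\sum_i m'_i\bigr) = \frac1n\bigl(n - dn - \tfrac12 dn\bigr) = 1 - \tfrac32 d$.

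The main obstacle is identifying the correct per-cluster inequality, and in particular the asymmetric accounting that charges a full $m_i$ but only half of $m'_i$ to cluster $i$ — this split is exactly what makes the two sums $\sum_i m_i = dn$ and $\sum_i m'_i = dn$ combine to the factor $\tfrac32$ — together with the (easy but essential) identity $\sum_i m'_i = dn$. Everything else is routine algebra.
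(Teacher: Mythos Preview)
Your proof is correct and follows essentially the same approach as the paper: both fix the optimal matching $\sigma$, establish $\sum_i m_i = \sum_i m'_i = dn$, and prove the per-cluster bound $a_i\,\pr(C_i,C'_{\sigma(i)}) \ge a_i - m_i - \tfrac12 m'_i$ (the paper phrases this equivalently as $\pr(C_i,C'_{\sigma(i)}) \ge 1 - \frac{2m_i + m'_i}{2a_i}$). The only cosmetic difference is that you verify the inequality by clearing denominators and expanding, whereas the paper reaches it via the intermediate estimate $|C'_{\sigma(i)}| \le |C_i| + m'_i$.
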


\begin{proof}
Denote by $\sigma$ the optimal matching of clusters in $C$ to clusters in $C'$, which achieves a misclassification of $dn$ points.  We show that just considering $\pr(C_{i},C'_{\sigma(i)})$ for each $C_{i} \in C$ achieves an F-measure of at least $1-3d/2$:
\begin{displaymath}
F(C,C') \ge \frac{1}{n} \sum_{C_{i} \in C} \vert C_{i} \vert \pr(C_{i},C'_{\sigma(i)}) \ge 1 - 3d/2.
\end{displaymath}

To see this, for a match of $C_{i}$ to $C'_{\sigma(i)}$ we denote by $m^{1}_{i}$ the number of points that are in $C_{i}$ but not in $C'_{\sigma(i)}$, and by $m^{2}_{i}$ the number of points that are in $C'_{\sigma(i)}$ but not in $C_{i}$: $m^{1}_{i} = \vert C_{i} - C'_{\sigma(i)} \vert$, $m^{2}_{i} = \vert  C'_{\sigma(i)} - C_{i} \vert$.  Because the total number of misclassified points is $dn$ it follows that
\begin{displaymath}
\sum_{C_{i} \in C} m^{1}_{i} = \sum_{C_{i} \in C} m^{2}_{i} = dn.
\end{displaymath}

By definition, $\vert C_{i} \cap C'_{\sigma(i)} \vert = \vert C_{i} \vert - m^{1}_{i}$.  Moreover, $\vert C'_{\sigma(i)}
\vert = \vert C'_{\sigma(i)} \cap C_{i} \vert + m^{2}_{i} \le \vert C_{i} \vert + m^{2}_{i}$.  It follows that
\begin{displaymath}
\pr(C_{i},C'_{\sigma(i)}) = \frac{2(\vert C_{i} \vert - m^{1}_{i})}{\vert C_{i} \vert + \vert C'_{\sigma(i)} \vert} \ge  \frac{2(\vert C_{i} \vert - m^{1}_{i})}{2 \vert C_{i} \vert + m^{2}_{i}} = \frac{2 \vert C_{i} \vert + m^{2}_{i}}{2 \vert C_{i} \vert + m^{2}_{i}} - \frac{m^{2}_{i} + 2m^{1}_{i}}{2 \vert C_{i} \vert + m^{2}_{i}} \ge 1 - \frac{m^{2}_{i} + 2m^{1}_{i}}{2 \vert C_{i} \vert}.
\end{displaymath}

We can now see that

\begin{displaymath}
\frac{1}{n} \sum_{C_{i} \in C} \vert C_{i} \vert \pr(C_{i},C'_{\sigma(i)}) \ge \frac{1}{n} \sum_{C_{i} \in C} \vert C_{i} \vert (1 - \frac{m^{2}_{i} + 2m^{1}_{i}}{2 \vert C_{i} \vert}) = \frac{1}{n} \sum_{C_{i} \in C} \vert C_{i} \vert  - \frac{1}{2n} \sum_{C_{i} \in C} m^{2}_{i} + 2m^{1}_{i} = 1 - \frac{3dn}{2n}.
\end{displaymath}

\end{proof}

\bibliography{references}

\end{document}